\theoremstyle{plain}
\newtheorem{theorem}{Theorem}[section]
\newtheorem{lemma}[theorem]{Lemma}
\newtheorem{corollary}[theorem]{Corollary}
\theoremstyle{definition}
\newtheorem{definition}[theorem]{Definition}
\newtheorem{ex}[theorem]{Example}
\newtheorem{step}{Step}
\def\keywords{%
\list{}{\advance\topsep by0.35cm\relax\small\rm
 \leftmargin=1cm
 \itemindent\listparindent
 \rightmargin\leftmargin}\item[\hskip\labelsep\bf Keywords: ]}
\theoremstyle{remark}
\renewenvironment{proof}{\noindent{\it Proof}.}{\qed}
\title{On Some Complexity Results for Even Linear Languages}  
\author{Liliana Cojocaru \\
Department of Computer Science\\ 
"Alexandru Ioan Cuza" University 
   and\\
"Petre Andrei" University of Ia\c{s}i \\
lcojocarus{@}yahoo.com
}
\date{}
\begin{document}
\thispagestyle{empty}
\maketitle
\thispagestyle{empty}
\pagenumbering{arabic}

\begin{abstract}
We deal with a normal form for \textit{context-free grammars}, called \textit{Dyck normal form}. 
This normal form is a syntactical restriction of the \textit{Chomsky normal form}, in which the 
two nonterminals occurring on the right-hand side of a rule are paired nonterminals. This pairwise 
property, along with several other terminal rewriting conditions, makes it possible to define a 
homomorphism from Dyck words to words generated by a grammar in Dyck normal form. We prove that 
for each context-free language $L$, there exist an integer $K$ and a homomorphism $\varphi$ such 
that $L=\varphi(D'_K)$, where $D'_K\subseteq D_K$ and $D_K$ is the one-sided Dyck language over 
$K$ letters. As an application we give an alternative proof of the inclusion of the class of 
\textit{even linear languages} in $\cal A$$\cal C$$^1$. 

\end{abstract}
\vspace*{-0.7cm}
\begin{keywords}
context-free languages, even linear languages, Dyck languages, alternating Turing machines, ${\cal A}{\cal C}^1$   
\end{keywords}

\section{Introduction}
A normal form for context-free grammars (CFGs) consists of restrictions 
imposed to the structure of context-free (CF) productions, especially on the number 
of terminals and nonterminals allowed on the right-hand side of a CF rule. Normal 
forms have turned out to be useful tools in studying syntactical properties 
of CFGs, in parsing theory, structural and descriptional complexity, inference and  
learning theory. Various normal forms for CFGs have been studied, but the most 
important still remain the \textit{Chomsky normal form}, \textit{Greibach normal form}, 
and \textit{operator normal form}. For definitions, results, and surveys on normal 
forms the reader is referred to \cite{H} and  \cite{PL}. 

A normal form is correct if it preserves the language generated by the original grammar. 
This condition is called {\it the weak equivalence}, i.e., a normal form preserves the 
language but may lose important syntactical or semantical properties of the original grammar. 
It is well known that Chomsky and Greibach normal forms are ``strong'' normal form in this 
respect. They preserve almost all syntactical and semantical properties of the original grammar. 

This paper is devoted to a new normal form for CFGs, called \textit{Dyck normal form}. The Dyck 
normal form is a syntactical restriction of the Chomsky normal form, in which the 
two nonterminals occurring on the right-hand side of a rule are paired nonterminals, in the 
sense that each left (right) nonterminal of a pair has a unique right (left) pairwise. This 
pairwise property imposed on the structure of the right-hand side of each rule induces a nested 
structure on the derivation tree of each word generated by a grammar in Dyck normal form. More 
precisely, each derivation tree of a word generated by a grammar in Dyck normal form, read in 
the depth-first order is a Dyck word, hence the name of the normal form. We have been 
inspired to develop this new normal form by the general theory of Dyck words and Dyck languages, 
that turned out to play a crucial role in the description and characterization of CFLs 
\cite{EHR2}, \cite{EHR3}, and \cite{IJR}. 

Furthermore, there exists always a homomorphism laying between the derivation tree of a word 
generated by a grammar in Chomsky normal form and its equivalent in Dyck normal form. In other 
words, Dyck and Chomsky normal forms are strongly equivalent. This property, along with several 
other terminal rewriting conditions imposed to a grammar in Dyck normal form, enable us to 
define a homomorphism from Dyck words to words generated by a grammar in Dyck normal form.

The definition and several properties of grammars in Dyck normal form are presented in Section 2. 
By exploiting the Dyck normal form, and several characterizations of Dyck languages presented in 
\cite{IJR}, we give a new characterization of CFLs in terms of Dyck languages. We prove in Section 
3, that for each CF language $L$ there exist an integer $K$ and a homomorphism $\varphi$ such that 
$L=\varphi(D'_K)$, where $D'_K$ is a subset of the one-sided Dyck language over $K$ letters. 
Using this homomorphism we prove in Section 4 that \textit{even linear languages} \cite{AP}, \cite{KM}, \cite{SG} can be 
accepted by alternating Turing machines (ATMs) in logarithmic space, square logarithmic time, and 
a logarithmic number of alternations, which is an alternative proof of the inclusion of the class 
of even linear languages in $U_{L}$-uniform $\cal A$$\cal C$$^1$. 

\section{Dyck Normal Form}

We assume the reader to be familiar with the basic notions of formal language theory. For an alphabet $X$, $X^*$ 
denotes the free monoid generated by $X$. By $|x|_a$ we denote the number of occurrences of the letter $a$ in the 
string $x\in X^*$, while $|x|$ is the length of $x\in V^*$. We denote by $\lambda$ the empty string. If $X$ is a 
finite set, then $|X|$ is the cardinality of $X$. 

\begin{definition}
A CFG\footnote{\textit{A context-free grammar} is denoted by $G=(N, T, P, S)$, 
where $N$ and $T$ are finite sets of \textit{nonterminals} and \textit{terminals}, respectively, $N\cap T= \emptyset$,
$S\in N-T$ is the grammar \textit{axiom}, and $P\subset N\times (N\cup T)^*$ is the finite set of productions.} 
$G=(N, T, P, S)$ is said to be in {\it Dyck normal form} if it satisfies the following conditions:
\begin{enumerate}
  \item G is in Chomsky normal form, 
  \item if $A\rightarrow a \in P$, $A\in N$, $A\neq S$, $a\in T$, 
   then no other rule in $P$ rewrites $A$, 
  \item for each $A\in N$ such that $X\rightarrow AB \in P$
   ($X\rightarrow BA \in P$) there is no other rule in $P$ of
  the form $X'\rightarrow B'A$ ($X'\rightarrow AB'$),
  \item for each rules $X\rightarrow AB$, $X'\rightarrow
  A'B$ ($X\rightarrow AB$, $X'\rightarrow AB'$), we have $A=A'$ ($B=B'$).
\end{enumerate}
\end{definition}

Note that the Dyck normal form  is a Chomsky normal form  on which several restrictions 
on the positions of nonterminals occurring on the right hand-side of each context-free 
production are imposed. The reasons for which we introduce these restrictions (items 2, 
3, and 4 in Definition 2.1) are the following. The second condition in Definition 2.1 
allows to make a partition between those nonterminals rewritten by nonterminals, and 
those nonterminals rewritten by terminals. This enables, in Section 3, to define 
a homomorphism from Dyck words to words generated by a grammar in Dyck normal form. 
The third and fourth conditions allow us to split the set of nonterminals into pairwise 
nonterminals, and thus to introduce bracketed pairs. Next we prove that the Dyck normal form  is correct. 

\begin{theorem}
For each CFG  $G=(N, T, P, S)$ there exists a
grammar $G'=(N', T, P', S)$ such that $L(G)=L(G')$ and $G'$ is in
Dyck normal form.
\end{theorem}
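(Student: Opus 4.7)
The plan is to construct $G'$ in three stages. First, start from a Chomsky normal form of $G$ (standard construction) with the additional convention that $S$ does not appear on any right-hand side. Second, rewrite to satisfy condition~2 of Definition~2.1: for each non-axiom nonterminal $A$ whose ruleset contains a terminal rule $A\to a$ together with other rules (binary rules or a second terminal rule), split $A$ into a family of fresh copies $A^{(a_i)}$ (one per terminal rule of $A$, each having only the single rule $A^{(a_i)}\to a_i$) and $A^{(N)}$ (carrying the binary rules of $A$). Every occurrence of $A$ on a right-hand side of any rule is then replaced by a disjunction over its copies, producing one new rule per copy. After discarding the old symbol $A$, the resulting grammar $G_1$ satisfies items~1 and~2, and a straightforward bijection on derivation trees shows $L(G_1)=L(G)$.

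The heart of the proof is stage three, which enforces items~3 and~4 by indexing fresh nonterminals by the \emph{binary rules} of $G_1$. For every binary rule $r=(X\to AB)\in P_1$ introduce two fresh symbols $L_r,R_r$; intuitively $L_r$ is a context-annotated copy of $A$ and $R_r$ of $B$, tied uniquely to~$r$. The rules of $G'$ are: the terminal rules $S\to a$ inherited from $G_1$; one rule $S\to L_r R_r$ per binary rule $r=(S\to AB)\in P_1$; and, for every binary rule $r=(X\to AB)\in P_1$, the rules $L_r\to L_{r'}R_{r'}$ for each binary rule $r'=(A\to CD)\in P_1$, together with $L_r\to a$ for each terminal rule $A\to a\in P_1$, and the symmetric rules for $R_r$ derived from the productions of $B$. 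By construction $L_r$ occurs only as a left component and $R_r$ only as a right component (item~3), while the $r$-indexing makes the pairing $L_r\leftrightarrow R_r$ a bijection between left and right symbols (item~4); item~2 is inherited from $G_1$ because every $L_r$ (resp.\ $R_r$) copies the rule-type of $A$ (resp.\ $B$).

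Finally, prove $L(G')=L(G_1)$ by a simultaneous induction on derivation length, establishing the invariant that for every binary rule $r=(X\to AB)\in P_1$ one has $L_r\Rightarrow_{G'}^* w$ iff $A\Rightarrow_{G_1}^* w$, and $R_r\Rightarrow_{G'}^* w$ iff $B\Rightarrow_{G_1}^* w$; language equality then follows from inspection of the $S$-rules. The main obstacle I expect is the bookkeeping around item~4 when a single nonterminal $A$ of $G_1$ participates in many binary rules: a naive relabeling would force a common right-partner for every occurrence of~$A$, violating item~4 whenever those partners should differ. Keying the fresh symbols $L_r,R_r$ to the \emph{rule}~$r$ rather than to the nonterminal~$A$ resolves this, at the cost of duplicating $A$'s entire sub-structure into every context in which $A$ appears; finiteness of $N'$ (bounded by $1+2|P_1^{\mathrm{bin}}|$) is the only combinatorial check required.
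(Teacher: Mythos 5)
Your construction is correct, and it reaches the Dyck normal form by a genuinely different route than the paper in the decisive stage. Stage two (splitting a nonterminal into per-terminal-rule copies plus a binary-rule copy) is essentially the paper's Step 1. But for conditions 3 and 4 the paper proceeds by iterated local repair: whenever a nonterminal occurs both as a left and a right component, or shares a partner with a different mate, it introduces a context-annotated copy (${}_ZA$ or $B_{A'}$), duplicates that symbol's rules, and repeats until no conflict remains; it then proves $L(G')\subseteq L(G)$ via a homomorphism and $L(G)\subseteq L(G')$ by tracking the CYK table through the relations $\hat h_t$, $\hat h_{\neg t}$. You instead make a single global pass, allocating one fresh pair $(L_r,R_r)$ per binary rule $r=(X\to AB)$ and copying $A$'s (resp.\ $B$'s) productions onto $L_r$ (resp.\ $R_r$); conditions 3 and 4 then hold by construction, since every binary right-hand side has the literal shape $L_{r'}R_{r'}$, and condition 2 is inherited because in $G_1$ a non-axiom symbol with a terminal rule has no other rule. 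Your equivalence argument --- the invariant $L_r\Rightarrow^*_{G'}w \iff A\Rightarrow^*_{G_1}w$ by induction on derivation length --- is simpler and more self-contained than the paper's CYK argument, and it sidesteps the termination and confluence questions that the paper's iterative procedure leaves implicit (``the procedure described above is repeated for $A'$, too''). What the paper's approach buys is that the original nonterminals survive into $G'$ and each new symbol is explicitly a substitution of an old one, which is what its homomorphism $h_d$ and the later Corollary 2.4 are phrased in terms of; your construction supports the same strong equivalence via the evident map $L_r\mapsto A$, $R_r\mapsto B$, at the cost of always paying $1+2|P_1^{\mathrm{bin}}|$ nonterminals even when few conflicts exist. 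Both proofs are valid; yours is the cleaner existence argument.
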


\begin{proof}
Suppose that $G$ is a CFG in Chomsky normal form. Otherwise, using the algorithm as 
described in \cite{H} or \cite{PL} we can convert $G$ into Chomsky normal form. 
In order to convert $G$ from Chomsky normal form into Dyck normal form  we proceed as follows.

\begin{step}\rm We check  whether $P$ contains two (or more) 
rules of the form $A\rightarrow a$, $A\rightarrow b$, $a\neq b$.  
If it does, then for each rule $A\rightarrow b$, $a\neq b$, we
introduce a new variable $A_b$. We add the new rule $A_b\rightarrow b$,
and we remove the rule $A\rightarrow b$. For each rule of the form
$X\rightarrow AB$ ($X\rightarrow BA$) we add the new rule
$X\rightarrow A_bB$ ($X\rightarrow BA_b$), while for a rule of
the form $X\rightarrow AA$ we add three new rules $X\rightarrow
A_bA$, $X\rightarrow AA_b$, $X\rightarrow A_bA_b$, without removing
the initial rules. We call this procedure an $A_b$-{\it terminal 
substitution} of $A$.

For each rule $A\rightarrow a$, $a\in T$, we check whether a rule of the form 
$A\rightarrow B_1B_2$, $B_1,B_2\in N$, exists in $P$. If it does, then a new  
nonterminal $A_a$ is introduced and we perform an $A_a$-terminal substitution 
of  $A$ for the rule $A\rightarrow a$.
\end{step}

\begin{step}\rm Suppose that there exist two (or more) rules of the form 
$X\rightarrow AB$ and $X'\rightarrow B'A$. If we have agreed on preserving 
only the left occurrences of $A$ in the right-hand sides, then according to 
condition 3 of Definition 2.1, we have to remove all right occurrences of $A$. 
To do this we introduce a new nonterminal ${}_ZA$ and all right occurrences 
of $A$, preceded at the left side by $Z$, in the right-hand side of a rule, 
are substituted by ${}_ZA$. For each rule that rewrites $A$, $A\rightarrow Y$, 
$Y\in N^2\cup T$, we add a new rule of the form ${}_ZA\rightarrow Y$, preserving 
the rule $A\rightarrow Y$. We call this procedure an ${}_ZA$-{\it nonterminal 
substitution} of $A$. 

According to this procedure, for the rule $X'\rightarrow B'A$, 
we introduce a new nonterminal ${}_{B'}A$, we add the rule $X'\rightarrow B'{}_{B'}A$, 
and remove the rule $X'\rightarrow B'A$. For each rule that rewrites $A$, of the 
form\footnote{This case deals with the possibility of having $Y=B'{}_{B'}A$, 
too.} $A\rightarrow Y$, $Y\in N^2\cup T$, we add a new rule of the form 
${}_{B'}A\rightarrow Y$, preserving the rule $A\rightarrow Y$. 
\end{step}

\begin{step}\rm  Finally, for each two rules $X\rightarrow AB$, $X'\rightarrow
A'B$ ($X\rightarrow BA$, $X'\rightarrow BA'$) with $A\neq A'$, a new nonterminal 
${}_{A'}B$ ($B_{A'}$) is introduced to replace $B$ from the second rule, and we  
perform an ${}_{A'}B$($B_{A'}$)-nonterminal substitution of $B$, i.e., we add 
$X'\rightarrow A'{}_{A'}B$, and remove $X'\rightarrow A'B$. For each rule that 
rewrites $B$, of the form $B\rightarrow Y$, $Y\in N^2\cup T$, we add a new rule 
${}_{A'}B\rightarrow Y$ by preserving the rule $B\rightarrow Y$. 

In the case that $A'$ occurs on the right-hand side of another rule, such
that $A'$ matches at the right side with another nonterminal
different of ${}_{A'}B$, then the procedure described above is repeated
for $A'$, too.
\end{step}

Note that, if one of the conditions 2, 3, and 4 in Definition 2.1, has been settled, we 
do not have to resolve it once again in further steps of this procedure. 

The new grammar 
$G'=(N', T, P', S)$  built in steps $1-3$ has all nonterminals from $N$ and productions 
from $P$, plus/minus all nonterminals and productions, respectively introduced/removed 
according to the substitutions performed during the above steps. 

Next we prove that grammars $G=(N, T, P, S)$, in Chomsky normal form, and $G'=(N', T, P', S)$, 
in Dyck normal form, generate the same language. In this order, consider the homomorphism 
$h_d: N'\cup T \rightarrow N \cup T$ defined by $h_d(x)=x$, $x\in T$, $h_d(X)=X$, 
for $X\in N$, and $h_d(X')=X$ for $X'\in N'-N$, $X\in N$ such that $X'$ is a 
(transitive\footnote{There exist $X_k\hspace{-0.1cm}\in \hspace{-0.1cm}N$, such that $X'$ 
is an $X'$-substitution of $X_k$, $X_k$ is an $X_k$-substitution of $X_{k-1}$,..., and 
$X_1$ is an $X_1$-substitution of $X$. All of them substitute $X$.}) $X'$-substitution of 
$X$, terminal or not, in the above construction of the grammar $G'$.

To prove that $L(G') \subseteq L(G)$ we extend $h_d$ to a
homomorphism from $(N'\cup T)^*$ to $(N \cup T)^*$ defined on the
classical concatenation operation. It is straightforward to prove by
induction, that for each $\alpha \Rightarrow^*_{G'} \delta$ we have
$h_d(\alpha) \Rightarrow^*_G h_d(\delta)$. This implies that for any 
derivation of a word $w\in L(G')$, i.e., $S\Rightarrow^*_{G'}w$, we 
have $h_d(S) \Rightarrow^*_G h_d(w)$, i.e., $S \Rightarrow^*_G w$, 
or equivalently, $L(G') \subseteq L(G)$. 

To prove that $L(G) \subseteq L(G')$ we make use of the CYK
(Cocke-Younger-Kasami) algorithm as described in \cite{PL}.
  
Let $w=a_1a_2...a_n$ be an arbitrary word from $L(G)$, and $V_{ij}$, $i\leq j$, 
$i,j\in \{1,..., n\}$, be the triangular matrix of size $n\times n$ built with the 
CYK algorithm \cite{PL}. Since $w\in L(G)$, we have $S \in V_{1n}$. We prove 
that $w\in L(G')$, i.e., $S \in V'_{1n}$, where $V'_{ij}$, $i\leq j$, $i,j\in \{1,..., n\}$ 
forms the triangular matrix obtained by applying the CYK algorithm to $w$ according to the 
$G'$ productions. 

We consider two relations $\hat h_t: N\cup T \rightarrow N'\cup T$ and
$\hat h_{\neg t}: N \rightarrow N'$. The first relation is defined by
$\hat h_t(x)=x$, $x\in T$, $\hat h_t(S)=S$, if $S\rightarrow x$, $x\in T$,
is a rule in $G$, and $\hat h_t(X)=X'$, if $X'$ is a (transitive) $X'$-terminal
substitution\footnote{There may exist several (distinct) terminal (nonterminal) 
substitutions for the same nonterminal $X$. This property makes $\hat h_t$ ($\hat h_{\neg t}$) 
to be a relation.} of $X$, and $X\rightarrow x$ is a rule in $G$. Finally, $\hat h_t(X)=X$ if 
$X\rightarrow x \in P$, $x\in T$. The second relation is defined as 
$\hat h_{\neg t}(S)=S$, $\hat h_{\neg t}(X)=\{X\}\cup \{ X'| X' \mbox{ is a
(transitive) } X'\mbox{-nonterminal substitution of } X\}$ and $\hat h_{\neg
t}(X)=X$, if there is no substitution of $X$ and no rule of the form $X\rightarrow x$, 
$x\in T$, in $G$. 

Notice that $\hat h_y(X_1\cup X_2)=\hat h_y(X_1)\cup \hat h_y(X_2)$, 
for $X_i \subseteq N$, $i\in \{1,2\}$, $y\in \{t, \neg t\}$. Furthermore,
using the relation $\hat h_t$, each rule $X\rightarrow x$ in $P$ 
has a corresponding set of rules $\{X'\rightarrow x| X'\in \hat h_t(X),
X\rightarrow x \in P\}$ in $P'$. Each rule $A\rightarrow BC$ in $P$ has a 
corresponding set of rules  $\{A'\rightarrow B'C'| A'\in \hat h_{\neg t}(A),
B'\in \hat h_{\neg t}(B)\cup \hat h_t(B), C'\in \hat h_{\neg t}(C)\cup\hat h_t(C),
B' \mbox{ and } C' \mbox{ are pairwise nonterminals, } A\rightarrow BC \in P\}$ in $P'$.

Consider $V'_{ii}= \hat h_t(V_{ii})$ and $V'_{ij}= \hat
h_{\neg t}(V_{ij})$, $i < j$, $i,j\in \{1,..., n\}$. We claim
that $V'_{ij}$, $i,j\in \{1,..., n\}$, $i\leq j$, defined as before, 
forms the triangular matrix obtained by applying the CYK algorithm 
to rules that derive $w$ in $G'$.

First, observe that for $i=j$, we have 
$V'_{ii}=\hat h_t(V_{ii})=\{A | A\rightarrow a_i\in P'\}$, $i\in \{1,..., n\}$, due to 
the definition of the relation $\hat h_t$. Now consider $k=j-i$, $k\in \{1,..., n-1\}$. 
We want to compute $V'_{ij}$, $i<j$.

By definition, we have $V_{ij}=\bigcup_{l=i}^{j-1} \{A| A\rightarrow BC, B\in V_{il}, 
C\in V_{l+1j}\}$, so that $V'_{ij}=\hat h_{\neg t}(V_{ij})\hspace{-0.1cm}=\hat h_{\neg t}(\bigcup_{l=i}^{j-1} \{A|
A\rightarrow BC, B\in V_{il}, C\in V_{l+1j}\})\hspace{-0.1cm}=\bigcup_{l=i}^{j-1}
\hat h_{\neg t}(\{A| A\rightarrow BC, B\in V_{il}, C\in V_{l+1j}\})=
\bigcup_{l=i}^{j-1}\{A'| A'\rightarrow B'C', A'\in \hat h_{\neg t}(A),
B'\in \hat h_{\neg t}(B)\cup \hat h_t(B), B\in V_{il},
C'\in \hat h_{\neg t}(C)\cup\hat h_t(C), C\in V_{l+1j}$,
$B'$ and $C'$ are pairwise nonterminals, $A\rightarrow BC \in P\}$.
Let us explicitly develop the last union.

If $k=1$, then $l\in \{i\}$. For each $i\in \{1,..., n-1\}$ we have
$V'_{ii+1}=\{A'| A'\rightarrow B'C', A'\in \hat h_{\neg t}(A), B'\in
\hat h_{\neg t}(B)\cup \hat h_t(B), B\in V_{ii}, C'\in \hat h_{\neg
t}(C)\cup\hat h_t(C), C\in V_{i+1i+1}$, $B'$ and $C'$ are pairwise
nonterminals, $A\rightarrow BC \in P\}$. Since $B\in
V_{ii}$ and $C\in V_{i+1i+1}$, $B'$ is a terminal substitution of
$B$, while $C'$ is a terminal substitution of $C$. Therefore, we have
$B'\notin \hat h_{\neg t}(B)$, $C'\notin\hat h_{\neg t}(C)$, so that
$B'\in \hat h_t(B)$, for all $B\in V_{ii}$, and $C'\in \hat h_t(C)$, 
for all $C\in V_{i+1i+1}$, i.e., $B'\in \hat h_t(V_{ii})=V'_{ii}$
and $C'\in \hat h_t(V_{i+1i+1})=V'_{i+1i+1}$. Therefore,
$V'_{ii+1}=\{A'| A'\rightarrow B'C', B'\in V'_{ii}, C'\in
V'_{i+1i+1}\}$.

If $k\geq 2$, then $l\in \{i, i+1,..., j-1\}$, and
$V'_{ij}=\bigcup_{l=i}^{j-1}\{A'| A'\rightarrow B'C', A'\in \hat h_{\neg t}(A),
B'\in \hat h_{\neg t}(B)\cup \hat h_t(B), B\in V_{il},
C'\in \hat h_{\neg t}(C)\cup\hat h_t(C), C\in V_{l+1j}$,
$B'$ and $C'$ are pairwise nonterminals, $A\rightarrow BC \in P\}$.

We now compute the first set of the above union, i.e., $V'_i=\{A'|
A'\rightarrow B'C', A'\in \hat h_{\neg t}(A), B'\in \hat h_{\neg
t}(B)\cup \hat h_t(B), B\in V_{ii}, C'\in \hat h_{\neg t}(C)\cup\hat
h_t(C), C\in V_{i+1j}$, $B'$ and $C'$ are pairwise nonterminals,
$A\rightarrow BC \in P\}$. By the same reasoning as before, the
condition $B'\in \hat h_{\neg t}(B)\cup \hat h_t(B), B\in V_{ii}$,
is equivalent with $B'\in \hat h_t(V_{ii})=V'_{ii}$.

Because $i+1 \neq j$, $C'$ is a nonterminal substitution of $C$. 
Therefore, $C'\notin \hat h_t(C)$, and the condition
$C'\in \hat h_{\neg t}(C)\cup \hat h_t(C), C\in V_{i+1j}$ is equivalent
with $C'\in \hat h_{\neg t}(V_{i+1j})=V'_{i+1j}$. So that
$V'_i= \{A'| A'\rightarrow B'C', B'\in V'_{ii}, C'\in V'_{i+1j}\}$.

Using the same method for each $l\in \{i+1,..., j-1\}$ we have
 $V'_l=\{A'| A'\rightarrow B'C', A'\in \hat h_{\neg t}(A), B'\in \hat
h_{\neg t}(B)\cup \hat h_t(B), B\in V_{il}, C'\in \hat h_{\neg
t}(C)\cup\hat h_t(C), C\in V_{l+1j}$, $B'$ and $C'$ are pairwise
nonterminals, $A\rightarrow BC \in P\}= \{A'| A'\rightarrow B'C',
B'\in V'_{il}, C'\in V'_{l+1j}\}$. 

In conclusion, $V'_{ij}=\bigcup_{l=i}^{j-1} \{A'| A'\rightarrow B'C', 
B'\in V'_{il}, C'\in V'_{l+1j}\}$, for each $i,j\in \{1,..., n\}$, i.e., 
$V'_{ij}$, $i\leq j$, contains the nonterminals of the $n\times n$ 
triangular matrix computed by applying the CYK algorithm to rules that 
derive $w$ in $G'$. Because $w\in L(G)$, we have $S\in V_{1n}$. That is 
equivalent with $S\in V'_{1n}=\hat h_t(V_{1n})$, if $n=1$, and 
$S\in V'_{1n}=\hat h_{\neg t}(V_{1n})$, if $n>1$, i.e., $w\in L(G')$.
\end{proof}

\begin{corollary}
Let $G$ be a CFG in Dyck normal form. Any terminal derivation  
in $G$ producing a word of length $n$, $n\geq 1$, takes $2n-1$ steps.
\end{corollary}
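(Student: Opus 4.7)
The plan is to observe that the statement only uses the fact that $G$ is in Chomsky normal form (condition 1 of Definition 2.1); the extra pairwise restrictions play no role, so the standard CNF counting argument applies. I would reduce the claim to a bookkeeping identity about how the two rule types change the shape of a sentential form.

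Concretely, since $G$ is in Dyck normal form, every production has shape $A\to BC$ (with $B,C\in N$) or $A\to a$ (with $a\in T$). Consider any terminal derivation $S=\alpha_0\Rightarrow\alpha_1\Rightarrow\cdots\Rightarrow\alpha_m=w$ with $|w|=n$, and let $N_k$ and $T_k$ denote, respectively, the number of nonterminal and terminal occurrences in $\alpha_k$. Applying a rule $A\to BC$ increases $N_k$ by $1$ and leaves $T_k$ unchanged, while applying $A\to a$ decreases $N_k$ by $1$ and increases $T_k$ by $1$. Let $p$ be the number of rules of the first type used along the derivation and $q$ the number of rules of the second type, so that $m=p+q$.

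Using the boundary data $N_0=1$, $T_0=0$, $N_m=0$, $T_m=n$, I would write the two conservation equations $p-q=N_m-N_0=-1$ and $q=T_m-T_0=n$. Solving gives $q=n$ and $p=n-1$, hence $m=p+q=2n-1$, which is exactly the claim. The case $n=1$ is handled uniformly: the derivation is a single step $S\to a$ (allowed because condition 2 only restricts nonterminals distinct from $S$), giving $m=1=2\cdot 1-1$.

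There is no real obstacle: the argument is the standard invariant for Chomsky normal form derivations, and it carries over to Dyck normal form without modification because the additional conditions 2--4 constrain \emph{which} rules appear, not their shape. The only point to flag in the write-up is precisely that condition 1 suffices, so that the bookkeeping above is legitimate.
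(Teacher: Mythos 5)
Your proposal is correct and follows the same route as the paper: both reduce the claim to the observation that a grammar in Dyck normal form is in particular in Chomsky normal form, so the standard $2n-1$ derivation-length property applies. The paper simply cites this as a known property of Chomsky normal form, whereas you additionally write out the standard counting/invariant argument that proves it; your bookkeeping is correct.
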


\begin{proof}
If $G$ is a CFG in Dyck normal form, then it is also in Chomsky normal form, and all properties of the latter hold.
\end{proof}

\begin{corollary}
If $G=(N, T, P, S)$ is a grammar in Chomsky normal form, and $G'=(N', T, P', S)$ its equivalent in 
Dyck normal form, then there exists a homomorphism $h_d:N'\cup T \rightarrow N \cup T$, such
that any derivation tree of $w\in L(G)$ is the homomorphic image of
a derivation tree of the same word in $G'$.
\end{corollary}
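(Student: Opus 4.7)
The plan is to exploit the homomorphism $h_d$ already defined in the proof of Theorem 2.2 and verify the tree-level statement by structural induction. A preliminary observation I would establish is that the construction of $G'$ preserves the shape of productions under $h_d$: every rule $A'\rightarrow B'C'$ (respectively $A'\rightarrow a$) in $P'$ arises from some rule $A\rightarrow BC$ (respectively $A\rightarrow a$) in $P$ with $h_d(A')=A$, $h_d(B')=B$, $h_d(C')=C$. This follows directly from Steps 1--3 of the proof of Theorem 2.2, since each substitution produces new rules by replacing exactly one or two nonterminals in an existing rule by variants that share the same $h_d$-image.

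Next I would prove, by induction on the height of the derivation tree, the following strengthened claim: for every nonterminal $X\in N$, every $G$-derivation tree $T$ rooted at $X$ and producing a string $u$, and every $X'\in \hat h_{\neg t}(X)\cup \hat h_t(X)$ consistent with the substitution history, there exists a $G'$-derivation tree $T'$ rooted at $X'$ deriving $u$ such that $h_d(T')=T$. The base case (height one) corresponds to a rule $X\rightarrow a\in P$; by the definition of $\hat h_t$ we have the matching rule $X'\rightarrow a\in P'$, and $h_d$ identifies the two. For the inductive step, if $T$ uses $X\rightarrow BC$ at the root and splits into subtrees $T_B$ on $u$ and $T_C$ on $v$, then the construction of $G'$ provides a rule $X'\rightarrow B'C'\in P'$ with $B'\in \hat h_{\neg t}(B)\cup \hat h_t(B)$ and $C'\in \hat h_{\neg t}(C)\cup \hat h_t(C)$, where $B'$ and $C'$ are the pairwise nonterminals forced by the position of $X'$. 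Applying the induction hypothesis to $T_B$ with target root $B'$ and to $T_C$ with target root $C'$ yields the required subtrees, which assemble into $T'$. Specializing the claim to $X=S=X'$ gives the corollary.

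The main obstacle is navigating the pairwise restrictions (conditions 3 and 4 of Definition 2.1): once the root label $X'$ is fixed, the labels of its two children are essentially dictated by pairing, so we must argue that this forced choice is compatible with the induction hypothesis applied to the two subtrees. This is precisely what the substitution procedures in Steps 2 and 3 of Theorem 2.2 are designed to guarantee: each time a nonterminal is replaced by a substitution variant, the complete rewriting capability of the original is copied onto the variant, so every branching choice available to $B$ (respectively $C$) in $G$ remains available to $B'$ (respectively $C'$) in $G'$. Once this invariance is made explicit, the forward implication combines with the trivial reverse implication (apply $h_d$ node-wise to any $G'$-tree and use rule preservation) to give the claim that derivation trees of $w$ in $G$ are exactly the $h_d$-images of derivation trees of $w$ in $G'$.
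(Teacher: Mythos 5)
Your proposal is correct and follows essentially the same route as the paper: it takes the homomorphism $h_d$ already defined in the proof of Theorem 2.2 and argues that the tree-lifting property is a consequence of how the substitution variants copy the full rewriting capability of the original nonterminals. The paper's own proof simply defines $h_d$ and declares the claim a ``direct consequence'' of the construction; your structural induction merely makes explicit the details the paper leaves implicit.
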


\begin{proof}
Consider the homomorphism $h_d:N'\cup T \rightarrow N \cup T$  
defined as $h_d(S)=S$, $h_d(A_t)= h_d({}_ZA)= h_d(A_Z)=A$, for 
each $A_t$-terminal or ${}_ZA$($A_Z$)-nonterminal substitution 
of $A$, $A_t, {}_ZA, A_Z \in N'-N$, $h_d(A)= A$,  $A\in N$, and 
$h_d(t)=t$, $t\in T$. The claim is a direct consequence of the 
way in which the new nonterminals $A_t$, ${}_ZA$, and $A_Z$ have 
been considered in the proof of Theorem 2.2. 
\end{proof}

Let $G$ be a grammar in Dyck normal form. In order to emphasis the pairwise 
brackets occurring on the right-hand side of each rule,  each nonterminal 
$A$ and $B$ occurring on a rule of the form $X\rightarrow AB$, is 
replaced by a left and right bracket $[_A$ and $]_B$, respectively. In each 
rule that rewrites $A$ or $B$, we replace $A$ by $[_A$, and $B$ by $]_B$, 
respectively. 

Next, we present an example of the conversion procedure described in Theorem 2.2, 
along with the homomorphism that lays between the grammars in Chomsky normal form 
and its equivalent in Dyck normal form.

\begin{ex}\rm 
Consider the CFG  $G=( \{E, T, R\}, \{ +,*,a\}, E, P)$ with  
$P=\{ E\rightarrow a/T*R/E+T, T\rightarrow a/T*R, R\rightarrow a \}$.

The Chomsky normal form of $G$ is $G'=(\{E_0,E,E_1,E_2,T,T_1,T_2,R\}, 
\{+,*, a\}, E_0, P')$ in which $P'=\{E_0\rightarrow a/TT_1/EE_1,
E\rightarrow a/TT_1/EE_1, T\rightarrow a/TT_1, T_1\rightarrow T_2R, 
E_1\rightarrow E_2T, T_2\rightarrow *, E_2\rightarrow +, R\rightarrow a\}$.

We now convert $G'$ into Dyck normal form. To do this, with
respect to Definition 2.1, item 2, we first have to remove 
$E\rightarrow a$ and $T\rightarrow a$. Then, according to
item 3, we remove the right occurrence of $T$ from the rule
$E_1\rightarrow E_2T$, along with other transformations that may be
required after completing these procedures. Let $E_3$ and $T_3$ be two 
new nonterminals. We remove  $E\rightarrow a$ and $T\rightarrow a$, and 
add the rules $E_3\rightarrow a$, $T_3\rightarrow a$, $E_0\rightarrow E_3E_1$, 
$E_0\rightarrow T_3T_1$, $E\rightarrow E_3E_1$, $E\rightarrow T_3T_1$,
$E_1\rightarrow E_2T_3$, and $T\rightarrow T_3T_1$.

Let $T'$ be the new nonterminal that replaces the right occurrence
of $T$. We add the rules $E_1\rightarrow E_2T'$, $T'\rightarrow TT_1$, and 
$T'\rightarrow T_3T_1$, and remove $E_1\rightarrow E_2T$.
We repeat the procedure with $T_3$ (added in the previous step), i.e.,
we introduce a new nonterminal $T_4$, remove $E_1\rightarrow E_2T_3$,
add $E_1\rightarrow E_2T_4$ and  $T_4 \rightarrow a$. Due to the new 
nonterminals $E_3$, $T_3$, and $T_4$, item 4  does not hold. 
To accomplish this condition too, we introduce three new nonterminals 
$E_4$ to replace $E_2$ in $E_1\rightarrow E_2T_4$, $E_5$ to replace $E_1$ 
in $E_0\rightarrow E_3E_1$ and $E\rightarrow E_3E_1$, and $T_5$ to
replace $T_1$ in  $E_0\rightarrow T_3T_1$ and $E\rightarrow T_3T_1$.
We remove all the above rules and add the new rules $E_1\rightarrow
E_4T_4$, $E_4 \rightarrow +$, $E_0\rightarrow E_3E_5$, $E\rightarrow
E_3E_5$, $E_5\rightarrow E_2T'$, $E_5\rightarrow E_4T_4$,
$E_0\rightarrow T_3T_5$, $E\rightarrow T_3T_5$, and $T_5\rightarrow
T_2R$. 

The Dyck normal form  of $G'$, using the bracket notation, is 
$G''=( \{E_0,[_E,]_{E_1},[_{E_2},[_{E_3}, 
[_{E_4}, $\\$]_{E_5}, ]_{T'},[_T,]_{T_1},[_{T_2},[_{T_3}, ]_{T_4},]_{T_5},]_R\}, 
\{+,*, a\}, E_0,P'')$, where 

$P''=\{E_0\rightarrow a/[_T\hspace{0.1cm}
]_{T_1}/[_E\hspace{0.1cm}]_{E_1}/[_{E_3}\hspace{0.1cm}]_{E_5}/[_{T_3}\hspace{0.1cm}]_{T_5},[_E\hspace{0.1cm}\rightarrow [_T\hspace{0.1cm}]_{T_1}/
[_E\hspace{0.1cm}]_{E_1}/[_{E_3}\hspace{0.1cm}]_{E_5}/[_{T_3}\hspace{0.1cm}]_{T_5},]_{E_1}\rightarrow[_{E_2}\hspace{0.1cm}]_{T'}/\hspace{0.1cm}[_{E_4}\hspace{0.1cm}]_{T_4}, ]_{E_5}\rightarrow[_{E_2}\hspace{0.1cm}]_{T'}/[_{E_4}\hspace{0.1cm}]_{T_4}, \hspace{0.1cm}[_T\hspace{0.1cm}\rightarrow\hspace{0.1cm}[_T\hspace{0.1cm}]_{T_1}/[_{T_3}\hspace{0.1cm}]_{T_5},
\hspace{0.1cm}]_{T'}\rightarrow[_T\hspace{0.1cm}]_{T_1}/\hspace{0.1cm}[_{T_3}\hspace{0.1cm}]_{T_5}, 
\hspace{0.1cm}]_{T_1}\rightarrow \hspace{-0.1cm}[_{T_2}\hspace{0.1cm}]_R,\hspace{0.1cm} $\\$]_{T_5}\rightarrow[_{T_2}\hspace{0.1cm}]_R, 
[_{T_2}\rightarrow *,\hspace{0.1cm}[_{T_3}\rightarrow a,\hspace{0.1cm}]_{T_4}\rightarrow a,\hspace{0.1cm}[_{E_2}\rightarrow
+,\hspace{0.1cm} [_{E_3}\rightarrow a,\hspace{0.1cm}[_{E_4}\rightarrow +,\hspace{0.1cm}]_R\rightarrow a\}$.

The homomorphism $h_{d}$ is defined as $h_d:N'\cup T \rightarrow N'' \cup T$,  
$h_d(E_0)= E_0$, $h_d([_E)= h_d([_{E_3})=E$, $h_d(]_{E_1})= h_d(]_{E_5})=E_1$,
$h_d([_{E_2})= h_d([_{E_4})=E_2$, $h_d([_T)= h_d(]_{T'})=h_d([_{T_3})=h_d(]_{T_4})=T$,
$h_d(]_{T_1})= h_d(]_{T_5})=T_1$, $h_d([_{T_2})=T_2$, $h_d(]_R)=R$,
$h_d(t)= t$, for each $t\in T$. 

The string $w=a*a*a+a$ is a word in $L(G'')=L(G)$ generated by a 
leftmost derivation $D$ in $G''$ as follows:

$D:$ $E_0\Rightarrow [_E\hspace{0.1cm}]_{E_1}\Rightarrow
[_T\hspace{0.1cm}]_{T_1}\hspace{0.1cm}]_{E_1}\Rightarrow
[_{T_3}\hspace{0.1cm}]_{T_5}\hspace{0.1cm}]_{T_1}\hspace{0.1cm}
]_{E_1}\Rightarrow a\hspace{0.1cm}]_{T_5}\hspace{0.1cm}]_{T_1}\hspace{0.1cm}
]_{E_1}\Rightarrow a\hspace{0.1cm}[_{T_2}\hspace{0.1cm}]_R\hspace{0.1cm}]_{T_1}\hspace{0.1cm}
]_{E_1}\Rightarrow a\hspace{0.1cm}*\hspace{0.1cm}]_R\hspace{0.1cm}]_{T_1}\hspace{0.1cm}
]_{E_1}\Rightarrow a\hspace{0.1cm}*\hspace{0.1cm}a\hspace{0.1cm}]_{T_1}\hspace{0.1cm}
]_{E_1}\Rightarrow a\hspace{0.1cm}*\hspace{0.1cm}a\hspace{0.1cm}[_{T_2}\hspace{0.1cm}]_R
\hspace{0.1cm} ]_{E_1}\Rightarrow a\hspace{0.1cm}*\hspace{0.1cm}a\hspace{0.1cm}*\hspace{0.1cm}]_R
\hspace{0.1cm} ]_{E_1}\Rightarrow a\hspace{0.1cm}*\hspace{0.1cm}a\hspace{0.1cm}*\hspace{0.1cm}a
\hspace{0.1cm} ]_{E_1}\Rightarrow a\hspace{0.1cm}*\hspace{0.1cm}a\hspace{0.1cm}*\hspace{0.1cm}a
\hspace{0.1cm} [_{E_4}\hspace{0.1cm}]_{T_4}\Rightarrow
a\hspace{0.1cm}*\hspace{0.1cm}a\hspace{0.1cm}*\hspace{0.1cm}a
\hspace{0.1cm} +\hspace{0.1cm}]_{T_4}\Rightarrow 
a\hspace{0.1cm}*\hspace{0.1cm}a\hspace{0.1cm}*\hspace{0.1cm}a
\hspace{0.1cm} +\hspace{0.1cm}a$.

Applying $h_{d}$ to the derivation $D$ of $w$ in $G''$ we obtain a derivation 
of $w$ in $G'$. If we consider the derivation tree $\cal T$ of $w$ in $G$, and   
the derivation tree $\cal T'$ of $w$ in $G''$, then $\cal T$ is the homomorphic 
image of $\cal T'$ through $h_d$. 
\end{ex}

\section{Characterizations of CFLs by Dyck Languages}

\begin{definition}\rm
Let $G_k=(N_k, T, P_k, S)$ be a CFG in Dyck normal form  with $|N_k-\{S\}|=2k$. 
Let $D: $ $S\Rightarrow u_1 \Rightarrow u_2\Rightarrow
\hspace*{-0.1cm}...\hspace*{-0.1cm} \Rightarrow u_n=w$, $n\geq 3$, be a leftmost
derivation of $w\in L(G)$. The {\it trace-word} of $w$ associated with the 
derivation $D$, denoted as $t_{w,D}$, is defined as the concatenation of 
nonterminals consecutively rewritten in $D$, excluding the axiom. The 
{\it trace-language} associated with $G_k$, denoted as $\L_k$, is  
$\L_k=\{t_{w,D}| D\mbox{ is a leftmost derivation of } w, \\w\in L(G_k) \}$.
\end{definition}

Note that the trace-word $t_{w,D}$ associated with a word $w\in L(G)$  
is defined only for those derivations $D$, in a CFG in Dyck normal form, of length 
at least $3$. Derivations of length $2$ do not exist in a grammar in Dyck normal 
form, while derivations of length $1$ correspond to rules of the form 
$S\rightarrow t$, $t\in T$, hence no nonterminal interferes during the derivation, 
unless the axiom. $t_{w,D}$, $w\in L(G)$, can also be read from the derivation tree 
in a depth-first order starting with the root, but ignoring the root (the axiom) 
and the leaves (terminals).  

The trace-word associated to $w=a*a*a+a$ and the leftmost derivation $D$, in 
Example 2.5, is $t_{a*a*a+a,D}=[_E\hspace{0.1cm}[_T\hspace{0.1cm}[_{T_3}
\hspace{0.1cm}]_{T_5}\hspace{0.1cm}[_{T_2}\hspace{0.1cm}]_R
\hspace{0.1cm}]_{T_1}\hspace{0.1cm}[_{T_2}\hspace{0.1cm}]_R
\hspace{0.1cm}]_{E_1}\hspace{0.1cm}[_{E_4}\hspace{0.1cm}]_{T_4}$.

\begin{definition}\rm
A one-sided Dyck language over $k$ letters, $k\geq 1$, is a
context-free language defined by the grammar $\Gamma_k=(\{S\}, T_k, P, S)$,
where $T_k=\{[_1,\hspace{0.1cm}[_2,...,[_k,
\hspace{0.1cm}]_1,\hspace{0.1cm}]_2,...,\hspace{0.1cm}]_k \}$ and
$P=\{S\rightarrow [_i\hspace{0.1cm}S\hspace{0.1cm}]_i, S\rightarrow
SS,S\rightarrow [_i\hspace{0.1cm}]_i \hspace{0.1cm}|\hspace{0.1cm}
1\leq i \leq k\}.$
\end{definition}

Let $G_k=(N_k, T, P_k, S)$ be a CFG in Dyck normal form. In order to keep 
control of each bracketed pair occurring in the right-hand side of each 
rule from $G_k$, we fix $N_k=\{S, [_1,\hspace{0.1cm}
[_2,...,[_k,\hspace{0.1cm}]_1,\hspace{0.1cm}]_2,...,\hspace{0.1cm}]_k \}$, 
and $P_k$ to be composed of rules of the form $X \rightarrow
[_i\hspace{0.1cm}]_i$, $1\leq i \leq k$, and $Y \rightarrow t$, $X, Y
\in N_k$,  $t \in T$. 

From \cite{IJR} we have adopted the next characterizations 
of $D_k$, $k\geq 1$, (Definition 3.3, and Lemmas 3.4 and 3.5). 

\begin{definition}\rm
For a string  $w$, let $w_{i:j}$ be its substring starting at the
$i^{th}$ position and ending at the $j^{th}$ position. Let $h$ be a
homomorphism defined as follows $h([_1)=h([_2)=...=h([_k)=[_1,$
$h(]_1)=h(]_2)=...=h(]_k)=]_1$. Let $w\in D_k$, $1 \leq i\leq j \leq |w|$, 
where $|w|$ is the length of $w$. We say that ($i$, $j$) is a {\it matched 
pair} of $w$, if $h(w_{i:j})$ is {\it balanced}, i.e., $h(w_{i:j})$ has an 
equal number of $[_1$'s and $]_1$'s and, in any prefix of $h(w_{i:j})$, the 
number of $[_1$'s is greater than or equal to the number of $]_1$'s.
\end{definition}

\begin{lemma}
A string $w\in \{[_1 \hspace{0.1cm},]_1\}^*$ is in $D_1$ if and only
if it is balanced. 
\end{lemma}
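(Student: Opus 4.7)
The statement is the classical one-bracket Dyck characterization, so I would prove the two directions separately. For the forward direction, I would proceed by induction on the number of steps in a leftmost derivation $S\Rightarrow^* w$ in $\Gamma_1$. The base case is the derivation $S\Rightarrow [_1 \,]_1$, which gives a trivially balanced word. For the inductive step, the first rule applied is either $S\rightarrow [_1 S\,]_1$ or $S\rightarrow SS$. In the first case, $w=[_1 w'\,]_1$ with $S\Rightarrow^* w'$; by the induction hypothesis $w'$ is balanced, and wrapping a balanced word with a matched pair keeps all prefix counts non-negative and preserves the total count. In the second case, $w=w_1w_2$ with both $w_i$ derivable and hence balanced by induction; concatenation of balanced words is balanced (prefix counts of $w_1w_2$ are either prefix counts of $w_1$ or equal the total count of $w_1$ plus prefix counts of $w_2$, all non-negative).

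For the reverse direction, I would do strong induction on $|w|$, noting that a balanced word has even length and $|w|\geq 2$. The base case $|w|=2$ forces $w=[_1\,]_1$, which is directly derivable. For the inductive step, I would introduce the key structural parameter: let $j$ be the smallest index such that the prefix $w_{1:j}$ is balanced. Since $w_1$ must be $[_1$ (a leading $]_1$ would immediately violate the prefix condition), we have $j\geq 2$, and by minimality $j$ is well-defined.

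Then I would split into two cases. If $j<|w|$, then $w=w_{1:j}\cdot w_{j+1:|w|}$ is a concatenation of two shorter balanced words (the suffix is balanced because the whole word and the chosen prefix are), so applying $S\rightarrow SS$ and the induction hypothesis to each factor yields a derivation. If $j=|w|$, then the running prefix count is strictly positive on $\{1,\dots,|w|-1\}$, which forces $w_1=[_1$ and $w_{|w|}=\,]_1$; the middle part $w_{2:|w|-1}$ is then balanced (its prefix counts equal those of $w$ shifted down by one, hence non-negative, and its total count is zero), so applying $S\rightarrow [_1 S\,]_1$ together with the induction hypothesis on the strictly shorter middle (or applying $S\rightarrow [_1\,]_1$ directly if the middle is empty) completes the derivation.

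The only delicate step is the case $j=|w|$: one has to be careful to argue that the middle substring really is balanced and strictly shorter, and to handle separately the degenerate case where the middle is empty so that the induction hypothesis is not invoked on a length-zero word. Beyond that bookkeeping, everything reduces to the standard prefix-count manipulations, so I expect the proof to be short.
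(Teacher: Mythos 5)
Your proof is correct, but note that the paper does not actually prove this lemma at all: it is stated as one of the characterizations ``adopted'' from Ibarra, Jiang and Ravikumar \cite{IJR}, so there is no in-paper argument to compare against. What you supply is the standard self-contained proof. The forward direction (induction on the length of a derivation in $\Gamma_1$, checking that both $S\rightarrow [_1 S\,]_1$ and $S\rightarrow SS$ preserve the prefix-count and total-count conditions) is routine and right. The reverse direction is where the content lies, and your key device --- taking $j$ to be the \emph{least} index with $w_{1:j}$ balanced, then splitting into the concatenation case $j<|w|$ (handled by $S\rightarrow SS$) and the ``strictly positive interior'' case $j=|w|$ (handled by $S\rightarrow [_1 S\,]_1$, with the middle balanced because all its prefix counts are those of $w$ shifted down by one) --- is exactly the right decomposition, and you correctly flag the degenerate empty-middle subcase. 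The one pedantic point worth a sentence in a written-up version: the lemma as stated quantifies over $w\in\{[_1,]_1\}^*$, and the empty word is vacuously balanced yet not generated by $\Gamma_1$ (there is no $S\rightarrow\lambda$ rule), so the equivalence should be read as restricted to nonempty strings, as your assumption $|w|\geq 2$ implicitly does.
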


Consider the homomorphisms defined as follows

\hspace{-0.4cm}$h_1([_1)\hspace{-0.1cm}=\hspace{-0.1cm}[_1,$ $h_1(]_1)\hspace{-0.1cm}=]_1,$ 
$h_1([_2)\hspace{-0.1cm}=\hspace{-0.1cm}h_1(]_2)\hspace{-0.1cm}=\hspace{-0.1cm}...\hspace{-0.1cm}=
\hspace{-0.1cm}h_1([_k)\hspace{-0.1cm}=\hspace{-0.1cm}h_1(]_k)=\hspace{-0.1cm}\lambda,$

\hspace{-0.4cm}$h_2([_2)\hspace{-0.1cm}=\hspace{-0.1cm}[_1,$ $h_2(]_2)\hspace{-0.1cm}=]_1,$
$h_2([_1)\hspace{-0.1cm}=\hspace{-0.1cm}h_2(]_1)\hspace{-0.1cm}=\hspace{-0.1cm}...\hspace{-0.1cm}=
\hspace{-0.1cm}h_2([_k)\hspace{-0.1cm}=\hspace{-0.1cm}h_2(]_k)\hspace{-0.1cm}=\hspace{-0.1cm}\lambda,$ .\hspace{0.1cm}.\hspace{0.1cm}. , 

\hspace{-0.4cm}$h_k([_k)\hspace{-0.1cm}=\hspace{-0.1cm}[_1,$ $h_k(]_k)\hspace{-0.1cm}=\hspace{0.1cm}]_1,$ 
$h_k([_1)\hspace{-0.1cm}=\hspace{-0.1cm}h_k(]_1)\hspace{-0.1cm}=\hspace{-0.1cm}...\hspace{-0.1cm}=
\hspace{-0.1cm}h_k([_{k-1})\hspace{-0.1cm}=\hspace{-0.1cm}h_k(]_{k-1})\hspace{-0.1cm}=\hspace{-0.1cm}\lambda.$

\begin{lemma}
We have $w \in D_k$, $k\geq 2$, if and only if the following
conditions hold: $i$) $(1$, $|w|)$ is a matched pair, and $ii$) for all 
matched pairs $(i$, $j)$, $h_k(w_{i:j})$  are in $D_1$, $k\geq 1$.
\end{lemma}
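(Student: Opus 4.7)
My plan is to establish both directions of the ``iff'' by induction: the ``only if'' by induction on the length of a derivation of $w$ in $\Gamma_k$, and the ``if'' by induction on $|w|$ after isolating an outermost matching.

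For the ``only if'' direction, I assume $w \in D_k$ and consider a derivation $S \Rightarrow^* w$ in $\Gamma_k$. The base case $w = [_i\,]_i$ makes both (i) and (ii) immediate. For longer words, $w$ arises either as $[_i u ]_i$ or as $uv$ with $u,v \in D_k$ produced by strictly shorter derivations, so I apply the inductive hypothesis to the pieces. Condition (i) is preserved because $h(w)$ is obtained from $h(u)$ (and $h(v)$) by wrapping or concatenation, and $D_1$ is closed under both by Lemma 3.4. For condition (ii), every matched pair of $w$ is either $(1,|w|)$, or the full span $[_i u ]_i$ in the wrapping case, or strictly contained in a piece; in each case $h_{k'}$ applied to the relevant substring is either $[_1 h_{k'}(u) ]_1$, a piece of $h_{k'}(u)$ or $h_{k'}(v)$, or a concatenation thereof, and each is in $D_1$ by the inductive hypothesis together with closure properties of $D_1$.

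For the ``if'' direction, I induct on $|w|$. The base case $|w| = 2$ is immediate: $(1,2)$ matched forces $w = [_{j_1}\,]_{j_2}$, and applying condition (ii) with $h_{j_1}$ forces $j_2 = j_1$, so $w = [_{j_1}\,]_{j_1} \in D_k$. For $|w| \ge 3$, let $[_{j_1}$ be the first symbol and let $m$ be the least index for which $(1,m)$ is a matched pair; such $m$ exists since the excess of $h(w_{1:i})$ is $1$ at $i=1$ and nonnegative throughout, so it first returns to $0$ at some $m \ge 2$. Position $m$ then carries a right bracket $]_{j_m}$. The crucial step is to show $j_m = j_1$: for $m \ge 3$, an easy excess calculation gives that $(2, m-1)$ is itself matched, so by condition (ii) $h_{j_1}(w_{2:m-1}) \in D_1$; comparing this with $h_{j_1}(w_{1:m}) \in D_1$, which equals $[_1 \, h_{j_1}(w_{2:m-1}) \, ]_1$ if $j_m = j_1$ and $[_1 \, h_{j_1}(w_{2:m-1})$ otherwise, the second alternative leaves an unmatched $[_1$, contradicting $D_1$-membership. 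The case $m = 2$ is settled by the base-case reasoning applied to $w_{1:2}$.

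Once $j_m = j_1$ is known, I split $w = [_{j_1}\, u \, ]_{j_1}\, v$ with $u = w_{2:m-1}$ and $v = w_{m+1:|w|}$, observe that both $u$ and $v$ (shifted to start at index $1$) inherit (i) and (ii) by routine bookkeeping on excesses and matched pairs, and obtain $u, v \in D_k$ from the inductive hypothesis when they are nonempty. Then $w \in D_k$ via the productions $S \to SS$, $S \to [_{j_1} S ]_{j_1}$, and $S \to [_{j_1}\,]_{j_1}$, handling the empty $u$ or $v$ cases separately. The step I expect to be the main obstacle is precisely this type-matching argument: condition (ii) quantifies over \emph{all} matched pairs, and the delicate point is to select the right matched pair, here $(2,m-1)$, and use its image under $h_{j_1}$ to force the outermost opening and closing brackets to share a type; once this is in place the rest reduces to standard excess bookkeeping.
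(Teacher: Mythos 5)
Your proof is correct, but there is nothing in the paper to compare it against: the paper states this lemma without proof, importing it (together with Definition 3.3 and Lemma 3.4) from the Ibarra--Jiang--Ravikumar reference \cite{IJR}. As a self-contained argument your proposal holds up. The ``only if'' direction by induction on the derivation in $\Gamma_k$ works, with one loose spot in the write-up: in the concatenation case $w=uv$ a matched pair can straddle the boundary without being $(1,|w|)$, so your classification ``either $(1,|w|)$, the full inner span, or strictly contained in a piece'' is literally false there; your subsequent clause ``or a concatenation thereof'' is what actually saves it, and it does, because a straddling matched pair splits into a matched suffix of $u$ and a matched prefix of $v$ (the suffix has excess $\le 0$ since $u$ is balanced and excess $\ge 0$ since it is a prefix of a balanced string, hence excess $0$), after which closure of $D_1$ under concatenation applies. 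The ``if'' direction is the substantive half, and you identify the right key step: letting $m$ be the first return of the excess to $0$, the pair $(2,m-1)$ is matched (the excess at $m-1$ must be exactly $1$), so condition (ii) applied with $h_{j_1}$ forces the closing bracket at position $m$ to have the same type as the opening bracket, and the decomposition $w=[_{j_1}u\,]_{j_1}v$ then inherits (i) and (ii) on both factors. One cosmetic caveat shared with the paper's conventions: for matched pairs whose image under some $h_{k'}$ is empty you need $\lambda$ to count as a member of $D_1$, which is consistent with Lemma 3.4 (the empty string is vacuously balanced) even though $\Gamma_1$ does not literally generate it.
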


\begin{definition}\rm
Let $w\in D_k$, ($i$, $j$) is a {\it nested pair} of $w$ if ($i$, $j$) is
a matched pair, and either $j=i+1$, or ($i+1$, $j-1$) is a matched pair.
\end{definition}

\begin{definition}\rm
 Let $w\in \hspace{-0.1cm}D_k$ and $(i, \hspace{-0.1cm}j)$ a matched pair of $w$. We say
 that $(i, j)$ is {\it reducible} if there exists an integer 
 $j'$, $i\hspace{-0.1cm}<\hspace{-0.1cm}j'\hspace{-0.1cm}<\hspace{-0.1cm}j$, 
such that $(i, j')$ and $(j'\hspace{-0.1cm}+\hspace{-0.1cm}1, j)$ are matched pairs of $w$.
\end{definition}

Consider $w\in D_k$. If ($i$, $j$) is a nested pair of $w$ then ($i$, $j$) 
is an irreducible pair. If ($i$, $j$) is a nested pair of $w$ then 
($i+1$, $j-1$) may be a reducible pair.

\begin{theorem} The trace-language associated with a CFG,
$G=(N_k, T, P_k, S)$ in Dyck normal form, with $|N_k|=2k+1$, is a 
subset of $D_k$. 
\end{theorem}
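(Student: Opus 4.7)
The plan is to proceed by structural induction on the derivation tree associated with the leftmost derivation $D$ of $w\in L(G_k)$. Since any derivation producing a terminal word must begin with a binary production $S\to [_i]_i$ (a rule $S\to t$ terminates in one step, and no derivation of length $2$ yields a terminal string in Chomsky normal form), the root of the tree has exactly two children $v_L, v_R$ labelled $[_i$ and $]_i$ respectively. For each non-root internal node $v$, let $\tau(v)$ denote the sequence of labels of the internal nodes of the subtree rooted at $v$, read in pre-order (with $v$ itself first). The leftmost-derivation rewriting order coincides with this pre-order, so $t_{w,D}=\tau(v_L)\cdot\tau(v_R)$.

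The heart of the proof is the following lemma, proved by induction on the size of the subtree at $v$: for every non-root internal node $v$ with label $X\in\{[_1,\dots,[_k,]_1,\dots,]_k\}$, the word obtained from $\tau(v)$ by deleting the initial letter $X$ belongs to $D_k\cup\{\lambda\}$. The base case is when the rule applied at $v$ is of the form $X\to t$, so that the subtree contributes only the single letter $X$ and the suffix is empty. In the inductive step, the rule $X\to [_j]_j$ is applied at $v$, whose children $u_L, u_R$ are labelled $[_j$ and $]_j$; the pre-order traversal yields $\tau(v)=X\cdot\tau(u_L)\cdot\tau(u_R)$, and the induction hypothesis gives $\tau(u_L)=[_j\alpha$ and $\tau(u_R)=]_j\beta$ with $\alpha,\beta\in D_k\cup\{\lambda\}$. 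Using the productions $S\to [_j]_j$, $S\to [_j S]_j$, and $S\to SS$ of $\Gamma_k$, each of the four combinations of $\alpha,\beta$ (empty or in $D_k$) yields $[_j\alpha]_j\beta\in D_k$.

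Applying the lemma to the two children $v_L, v_R$ of the root gives $\tau(v_L)=[_i\alpha$ and $\tau(v_R)=]_i\beta$ with $\alpha,\beta\in D_k\cup\{\lambda\}$; the same closure argument then shows $t_{w,D}=[_i\alpha]_i\beta\in D_k$, establishing $\L_k\subseteq D_k$.

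The argument presents no real obstacle. The main care required is (i) to translate the leftmost-derivation order into the pre-order traversal of the derivation tree, (ii) to keep the pairwise indexing of $[_j$ and $]_j$ consistent, which is guaranteed by the Dyck normal form combined with the restricted shape of $P_k$ fixed just before Definition 3.3, and (iii) to verify the four sub-cases in the closure step, each of which reduces to a single production of $\Gamma_k$.
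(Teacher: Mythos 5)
Your proof is correct, but it takes a genuinely different route from the paper's. The paper does not derive membership in $D_k$ directly; instead it invokes the Ibarra--Jiang--Ravikumar characterization (Lemmas 3.4 and 3.5): it inducts on the \emph{height} of subtrees to show that every subtree (and every ``left embedded subtree'') of the derivation tree corresponds to a matched pair $(i,j)$ of $t_{w,D}$ with each $h_{k'}({t_{w,D}}_{i:j})$ balanced, and then --- the delicate part --- argues by contradiction that \emph{every} matched pair of $t_{w,D}$ arises this way, so that the hypotheses of Lemma 3.5 are verified for all matched pairs, not just the tree-induced ones. You bypass all of this: your structural induction exhibits, in effect, a derivation of $t_{w,D}$ in the Dyck grammar $\Gamma_k$ itself, using only the closure of $D_k$ under concatenation ($S\rightarrow SS$) and bracketing ($S\rightarrow [_j\,S\,]_j$, $S\rightarrow [_j\,]_j$), together with the fact that the restricted form of $P_k$ forces the two children of any binary node to carry a matching bracket pair and that leftmost rewriting order is pre-order. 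This makes your argument shorter and self-contained --- it needs neither Lemma 3.4 nor Lemma 3.5 nor the contradiction step. What the paper's longer route buys is the explicit correspondence between matched/nested pairs of $t_{w,D}$ and subtrees of the derivation tree, which is structural information beyond bare membership in $D_k$ (and which motivates the surrounding discussion of reducible and nested pairs); your proof establishes exactly the stated inclusion and no more, which is all the theorem asks for.
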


\begin{proof}
Let $N_k=\{S,[_1,...,[_k,\hspace{0.1cm}]_1,...,\hspace{0.1cm}]_k\}$ be the set of 
nonterminals, $w\in L(G)$, and $D$ a leftmost derivation of $w$. We show that any 
subtree of the derivation tree, read in the depth-first order, by  
ignoring the root and the terminal nodes, corresponds to a matched pair in  
$t_{w,D}$. In particular, $(1, |t_{w,D}|)$ will be a matched pair. Denote by 
${t_{w,D}}_{i:j}$ the substring of $t_{w,D}$ starting at the $i^{th}$ position 
and ending at the $j^{th}$ position of $t_{w,D}$. We show that for all matched 
pairs $(i, j)$, $h_{k'}({t_{w,D}}_{i:j})$ belong to $D_1$, $1\leq k'\leq k$. We 
prove these claims by induction on the height of subtrees.

{\it Basis.} Any subtree of height $n=1$, read in the depth-first order, looks 
like $[_i \hspace{0.1cm}]_i$, $1\leq i\leq k$. Therefore, it satisfies the above 
conditions.

{\it Induction step.} Assume that the claim is true for all subtrees of 
height $h$, $h<n$. We  prove it for $h=n$. Each subtree of height $n$ can have 
one of the following structures. The level $0$ of the subtree is marked by a left 
or right bracket. This bracket will not be considered when we read the subtree. 
Denote by $[_{m}$ the left son of the root. Then the right son is
labeled by $]_{m}$. They are the roots of a left and right subtree,
for which at least one has height $n-1$.

Suppose that both subtrees have the height $1\leq h\leq n-1$. By the induction 
hypothesis, let us suppose further that the left subtree corresponds to the matched 
pair $(i_l,j_l)$, and the right subtree corresponds to the matched pair $(i_r, j_r)$, 
$i_r=j_l+2$, because the position $j_l+1$ is taken by  $]_{m}$. As $h$ is a homomorphism, 
we have $h({t_{w,D}}_{i_l-1:j_r})\hspace{-0.1cm}=
h([_m{t_{w,D}}_{i_l:j_l}]_m{t_{w,D}}_{j_l+2:j_r})\hspace{-0.1cm}=\hspace{-0.1cm}
h([_m)h({t_{w,D}}_{i_l:j_l})h(]_m)h({t_{w,D}}_{j_l+2:j_r})$. Hence, 
$h({t_{w,D}}_{i_l-1:j_r})$ satisfies all conditions in Definition 3.3, and  
thus $(i_l-1,j_r)$ that corresponds to the considered subtree of 
height $n$, is a matched pair. By the induction hypothesis, 
$h_{k'}({t_{w,D}}_{i_l:j_l})$ and 
$h_{k'}({t_{w,D}}_{i_r:j_r})$ are in $D_1$, $1\leq k'\leq k$. Therefore,  
$h_{k'}({t_{w,D}}_{i_l-1:j_r})= h_{k'}([_m)h_{k'}({t_{w,D}}_{i_l:j_l})h_{k'}(]_m)
h_{k'}({t_{w,D}}_{j_l+2:j_r})\in
 \{h_{k'}({t_{w,D}}_{i_l:j_l})h_{k'}({t_{w,D}}_{j_l+2:j_r}),
[_1h_{k'}({t_{w,D}}_{i_l:j_l})]_1h_{k'}({t_{w,D}}_{j_l+2:j_r})\}$
are in $D_1$.

Note that in this case the matched pair $(i_l-1,j_r)$ is reducible
into $(i_l-1,j_l+1)$ and $(j_l+2,j_r)$, where $(i_l-1,j_l+1)$
corresponds to the substring
${t_{w,D}}_{i_l-1:j_l+1}=[_m{t_{w,D}}_{i_l:j_l}]_m$. We refer to 
this structure as the {\it left embedded subtree}, i.e., $(i_l-1,j_l+1)$ 
is a nested pair. A similar reasoning is applied for the case when one 
of the subtrees has the height $0$. 

Analogously, it can be shown that 
the initial tree corresponds to the matched pair $(1, |t_{w,D}|)$, i.e., 
the first condition of Lemma 3.5 holds.

So far, we have proved that each subtree of the derivation tree, and each 
left embedded subtree, corresponds to a matched pair
$(i,j)$ and $(i_l,j_l)$, such that $h_{k'}({t_{w,D}}_{i:j})$ and
$h_{k'}([_m{t_{w,D}}_{i_l:j_l}]_m)$, $1\leq k'\leq k$, are in $D_1$.

Next we show that all matched pairs from $t_{w,D}$ correspond only
to subtrees, or left embedded subtrees, from the derivation tree. To derive 
a contradiction, let us suppose that there exists a matched pair
$(i, j)$ in $t_{w,D}$, that does not correspond to any subtree, or
left embedded subtree, of the derivation tree read in the depth-first order. 

Since $(i, j)$ does not correspond to any subtree, or left embedded subtree, there 
exist two adjacent subtrees $\theta_1$ (a left embedded subtree) and  $\theta_2$ 
(a right subtree) such that $(i, j)$ is composed of two adjacent ``subparts'' of $\theta_1$ 
and $\theta_2$. In terms of matched pairs, if $\theta_1$ corresponds to the matched pair 
$(i_1, j_1)$ and $\theta_2$ corresponds to the matched pair $(i_2, j_2)$, such that $i_2=j_1+2$, 
then there exists a suffix $s_{i_1-1:j_1+1}$ of ${t_{w,D}}_{i_1-1:j_1+1}$, and a prefix 
$p_{i_2:j_2}$ of ${t_{w,D}}_{i_2:j_2}$, such that ${t_{w,D}}_{i:j}=s_{i_1-1:j_1+1}p_{i_2:j_2}$. 
Furthermore, without loss of generality, we assume that $(i_1, j_1)$ and $(i_2, j_2)$ are nested 
pairs. Otherwise, the matched pair $(i, j)$ can be ``narrowed'' until $\theta_1$ and $\theta_2$
are characterized by two nested pairs. If $(i_1, j_1)$ is a nested pair, then so is $(i_1-1, j_1+1)$. 
 
Since $s_{i_1-1:j_1+1}$ is a suffix of ${t_{w,D}}_{i_1-1:j_1+1}$ and $(i_1-1, j_1+1)$ is a matched 
pair, with respect to Definition 3.3, the number of $]_1$'s in $h(s_{i_1-1:j_1+1})$ is greater than 
or equal to the number of $[_1$'s in $h(s_{i_1-1:j_1+1})$. On the other hand,  
$s_{i_1-1:j_1+1}$ is also a prefix of ${t_{w,D}}_{i:j}$, because $(i, j)$ 
is a matched pair, by our hypothesis. Therefore, the number of $[_1$'s in 
$h(s_{i_1-1:j_1+1})$ is greater than or equal to the number of $]_1$'s in $h(s_{i_1-1:j_1+1})$

Hence, the only possibility for $s_{i_1-1:j_1+1}$ to be a suffix for ${t_{w,D}}_{i_1-1:j_1+1}$ and 
a prefix for ${t_{w,D}}_{i:j}$ is the equality between the number of $[_1$'s and $]_1$'s 
in $h(s_{i_1-1:j_1+1})$. This property holds if and only if $s_{i_1-1:j_1+1}$
corresponds to a matched pair in ${t_{w,D}}_{i_1-1:j_1+1}$, i.e., if $i_s$ and $j_s$ are
the start and the end positions of $s_{i_1-1:j_1+1}$ in ${t_{w,D}}_{i_1-1:j_1+1}$, then
$(i_s, j_s)$ is a matched pair. Thus,  $(i_1-1, j_1+1)$ is a reducible pair
into $(i_1-1, i_s-1)$ and $(i_s, j_s)$, where $j_s=j_1+1$. We have reached a 
contradiction, i.e., $(i_1-1, j_1+1)$ is reducible. 

Therefore, the matched pairs in $t_{w,D}$ correspond to subtrees,
or left embedded subtrees, in the derivation tree. For these matched
pairs we have already proved that they satisfy Lemma 3.5. Accordingly,
$t_{w,D} \in D_k$, and consequently the trace-language associated
with $G$ is a subset of $D_k$.  
\end{proof}

\begin{theorem}
Given a CFG  $G$ there exist an integer $K$, a homomorphism $\varphi$, and 
a subset $D'_K$ of the Dyck language $D_K$, such that $L(G)=\varphi(D'_K)$.  
\end{theorem}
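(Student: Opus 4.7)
By Theorem 2.2 I may assume $G=(N,T,P,S)$ is already in Dyck normal form. Choose $K$ so that $|N-\{S\}|=2K$, and relabel the nonterminals as $N=\{S,[_1,\ldots,[_K,]_1,\ldots,]_K\}$ exactly as in the setup preceding Theorem 3.8. The plan is to take $D'_K$ to be the trace-language $\mathcal{L}_K$ of $G$, which by Theorem 3.8 is already a subset of $D_K$, and to build a homomorphism $\varphi:(N-\{S\})^{*}\rightarrow T^{*}$ with the property that $\varphi(t_{w,D})=w$ for every leftmost derivation $D$ of every $w\in L(G)$ with $|w|\geq 2$.

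The homomorphism is defined one symbol at a time, exploiting condition 2 of Definition 2.1: every nonterminal $A\neq S$ that admits any rule $A\rightarrow a$ with $a\in T$ admits \emph{only} that one rule, so the assignment $\varphi(A)=a$ is unambiguous. For each remaining symbol, i.e. a bracket that appears only as the left-hand side of binary rules $A\rightarrow [_i]_i$, I set $\varphi(A)=\lambda$, and extend $\varphi$ multiplicatively to $(N-\{S\})^{*}$. The key observation is that $t_{w,D}$ is, by construction, the depth-first listing of the internal nonterminals of the derivation tree of $w$ obtained by skipping the root and the terminal leaves; consequently the terminal-producing nonterminals occur in $t_{w,D}$ in precisely the left-to-right order in which their terminal images occur in $w$, while the bracket-producing nonterminals contribute $\lambda$. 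Hence $\varphi(t_{w,D})=w$, and both inclusions $L(G)\subseteq\varphi(D'_K)$ and $\varphi(D'_K)\subseteq L(G)$ follow at once: the first by picking any leftmost $D$ for each $w$ and noting $t_{w,D}\in\mathcal{L}_K=D'_K$, the second directly from the definition of $\mathcal{L}_K$.

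The one delicate point I foresee is the finite set $F=\{t\in T:S\rightarrow t\in P\}$ of single-letter words in $L(G)$, whose derivations have length one and for which $t_{w,D}$ is explicitly undefined. I would absorb $F$ into the construction by enlarging $D'_K$ with one dedicated irreducible pair $[_j]_j\in D_K$ per element of $F$, choosing a fresh bracket index $j$ for each such $t$ (incrementing $K$ accordingly) and declaring $\varphi([_j)=t$, $\varphi(]_j)=\lambda$. This keeps $D'_K\subseteq D_K$ and yields the stated equality $L(G)=\varphi(D'_K)$. The rest of the argument is essentially verification, since Theorem 3.8 has already done the serious combinatorial work of showing $\mathcal{L}_K\subseteq D_K$; the only content to check is that the natural bookkeeping of terminal- versus bracket-rewriting steps in a depth-first traversal of the derivation tree agrees with the linear order in which the terminals appear in $w$, which is a direct consequence of the leftmost-derivation convention.
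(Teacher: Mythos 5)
Your proposal is correct and follows essentially the same route as the paper: take $D'_K$ to be the trace-language (a subset of $D_K$ by Theorem~3.8), define $\varphi$ to erase pair-rewriting nonterminals and send each terminal-rewriting nonterminal to its unique terminal (using condition~2 of Definition~2.1), and patch the single-letter words $S\rightarrow t$ by adjoining one fresh irreducible bracket pair per such terminal with $\varphi([_j)=t$, $\varphi(]_j)=\lambda$. This is precisely the paper's construction (there phrased as extending $G_k$ to $G_{k+p}$ and setting $D'_K=\mathcal{L}_k\cup L_p$), so no further comparison is needed.
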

 
\begin{proof}
Let $G$ be a CFG and $G_k=(N_k, T, P_k, S)$ be the Dyck normal form  of
$G$, such that $N_k=\{S, [_1,\hspace{0.1cm}[_2,...,[_k,\hspace{0.1cm}]_1,
\hspace{0.1cm}]_2,...,\hspace{0.1cm}]_k \}$. Let $\L_k$ be the
trace-language associated with $G_k$. Consider $\{t_{k+1},...,t_{k+p}\}$ the 
ordered subset of $T$, such that $S\rightarrow t_{k+i} \in P$, $1\leq i\leq p$.  
We define $N_{k+p}=N_k \cup \{[_{t_{k+1}},..., [_{t_{k+p}}, ]_{t_{k+1}},... 
]_{t_{k+p}}\}$, and $P_{k+p}=P_k \cup\{ S\rightarrow [_{t_{k+i}}]_{t_{k+i}}, 
[_{t_{k+i}} \rightarrow t_{k+i}, ]_{t_{k+i}} \rightarrow  \lambda|  S\rightarrow t_{k+i} 
\in P, 1\leq i \leq p\}$. Certainly, the new grammar $G_{k+p}=(N_{k+p}, T, 
P_{k+p}, S)$ generates the same language as $G_k$. 

Let $\varphi: (N_{k+p}-\{S\})^+\rightarrow T^*$ be the homomorphism
defined by $\varphi(N)=\lambda$, for each rule of the form
$N\rightarrow XY$, $N, X, Y\in N_k-\{S\}$, and $\varphi(N)=t$, for
each rule of the form $N\rightarrow t$, $N\in N_k-\{S\}$, and $t\in
T$, $\varphi([_{k+i})=t_{k+i}$ ($\varphi([_{k+i})=\lambda$, if $t_{k+i}=\lambda$), 
and $\varphi(]_{k+i})=\lambda$, for each $1\leq i \leq p$. It is evident that 
$L=\varphi(D'_K)$, where $K=k+p$, $D'_K=\L_k\cup{L_p}$, and 
$L_p=\{[_{t_{k+1}} \hspace{0.1cm}]_{t_{k+1}},..., [_{t_{k+p}}\hspace{0.1cm}]_{t_{k+p}}\}$.
\end{proof}

\section{Even Linear Languages are Contained in $\cal A$$\cal C$$^1$ - An Alternative Proof}

Let $G_k=(N_k, T, P_k, S)$ be an arbitrary CFG in Dyck normal form, with 
$N_k=\{S, [_1,...,[_k,\hspace{0.1cm}]_1,$\\$...,]_k \}$. Let 
$\varphi: (N_k-\{S\})^+ \rightarrow T^+$ be a variant of the homomorphism
introduced in the proof of Theorem 3.9, defined by  $\varphi(N)=\lambda$, 
for each rule of the form $N\rightarrow XY$, $N, X, Y\in N_k$, $N\neq S$, and 
$\varphi(N)=t$, for each rule of the form $N\rightarrow t$, $N\in N_k$, $N\neq S$, 
$t\in T$. Next we divide $N_k$ into three sets $N^{(1)}$, $N^{(2)}$, $N^{(3)}$ as follows 
\vspace{-0.1cm}
\begin{enumerate}
	\item $[_i$ and $]_i$ belong to $N^{(1)}$ iff $\varphi ([_i)=t$ and $\varphi (]_i)=t'$, $t, t' \in T$, 
	\item $[_i$ and $]_i$ belong to $N^{(2)}$ iff $\varphi ([_i)=t$ and $\varphi(]_i)=\lambda$, or 
	   $\varphi ([_i)=\lambda$ and $\varphi (]_i)=t$, $t\in T$, 
	\item $[_i$, $]_i\in N^{(3)}$ iff $\varphi ([_i)=\lambda$ and $\varphi (]_i)=\lambda$. 
\end{enumerate}

Certainly, $N_k-\{S\}=N^{(1)} \cup N^{(2)}\cup N^{(3)}$ and $N^{(1)}\cap N^{(2)}\cap N^{(3)}=\emptyset$. 
The set $N^{(2)}$ is further divided into $N^{(2)}_l$ and $N^{(2)}_r$. The subset $N^{(2)}_l$ 
contains the pairs ($[_i$, $]_i$) $\in N^{(2)}$ for which $\varphi ([_i)\neq\lambda$, while $N^{(2)}_r$ 
contains the pairs ($[_i$, $]_i$) $\in N^{(2)}$ for which $\varphi (]_i)\neq\lambda$. We have 
$N^{(2)}= N^{(2)}_l \cup N^{(2)}_r$ and $N^{(2)}_l \cap N^{(2)}_r=\emptyset$. In the sequel, for each two 
brackets $[_i$ and $]_i$ belonging to $X$, $X\in \{N^{(1)}, N^{(2)}, N^{(3)}\}$, we use the notation 
($[_i$, $]_i$) $\in X$. 

\begin{definition}
An \textit{even linear grammar} (ELG) is a context-free grammar $G=(N, \Sigma, P, S)$  in which  
each rule in $P$ is a linear rule of the form
\begin{enumerate}
\item $X\rightarrow t_1Yt_2$, $t_1,t_2\in \Sigma^*$, $X,Y\in N$ and $|t_1|=|t_2|$, 
\item $X\rightarrow t$, where $t\in\Sigma^*$, $X,Y\in N$. 
\end{enumerate} 
\end{definition}

A language $L$ is even linear if there exists an ELG that generates $L$. Even linear languages 
(henceforth ELIN)  form a proper subclass of the class of linear languages and properly includes 
the class of regular languages \cite{AP}. 

\begin{lemma}
For each even linear grammar $G$, there exits a grammar $G_k$ in Dyck normal form  with $N^{(3)}=\emptyset$, such that $L(G) = L(G_k)$.  
\end{lemma}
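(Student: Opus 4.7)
The plan is to convert the ELG $G$ first into a carefully structured Chomsky normal form and then apply Theorem 2.2 to obtain a grammar in Dyck normal form, arguing that the ELG condition $|t_1|=|t_2|$ is precisely what forces every resulting bracket pair to contain at least one terminal-producing nonterminal.

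I would begin with standard preprocessing: eliminate unit productions $X\rightarrow Y$ (corresponding to the case $|t_1|=|t_2|=0$), eliminate $\lambda$-productions, and, if necessary, introduce a fresh axiom so that $S$ never occurs on a right-hand side. By inserting intermediate nonterminals I may further assume every linear rule has the shape $X\rightarrow aYb$ with $a,b\in\Sigma$. The Chomsky normal form is then built as follows: each rule $X\rightarrow aYb$ is replaced by $X\rightarrow L_a Z$, $Z\rightarrow YR_b$, $L_a\rightarrow a$, $R_b\rightarrow b$ with $Z$ fresh; each terminal rule $X\rightarrow c_1\cdots c_m$, $m\ge 2$, is split into the right-nested chain $X\rightarrow T_{c_1}Y_1$, $Y_i\rightarrow T_{c_{i+1}}Y_{i+1}$, $Y_{m-2}\rightarrow T_{c_{m-1}}T_{c_m}$, with $T_{c_j}\rightarrow c_j$.

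The key invariant of this Chomsky normal form is that every binary rule $A\rightarrow BC$ has at least one of $B,C$ whose sole production is of the form $\cdot\rightarrow t$, $t\in\Sigma$: the rules arising from $X\rightarrow aYb$ supply the terminal-producing $L_a$ or $R_b$, and each rule in the chain arising from $X\rightarrow c_1\cdots c_m$ contains some $T_{c_j}$. I would then invoke Theorem 2.2 to pass to a Dyck normal form $G_k$. The three substitution procedures used there (the $A_b$-terminal substitution and the two kinds of nonterminal substitutions) all introduce a new nonterminal whose production set is a copy of the substituted one, so the dichotomy ``terminal-producing versus pair-producing'' is preserved, and every new binary rule is a variant $X'\rightarrow A'B'$ of an original rule $X\rightarrow AB$ in which $A',B'$ preserve the types of $A,B$. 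Consequently each bracket pair $([_i,]_i)$ of $G_k$ inherits at least one terminal-producing side, which by definition of $\varphi$ means $\varphi([_i)\in\Sigma$ or $\varphi(]_i)\in\Sigma$; that is, $N^{(3)}=\emptyset$, while $L(G_k)=L(G)$ follows from Theorem 2.2.

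The main obstacle is bookkeeping: when several substitutions compose---for instance when a single nonterminal is first split to meet condition 2, then duplicated to meet condition 3, then duplicated again to meet condition 4---one must verify by induction on the three steps of the procedure of Theorem 2.2 that the invariant ``every binary right-hand side contains a terminal-producing nonterminal'' survives each step, and that no bracket pair introduced along the way escapes the analysis.
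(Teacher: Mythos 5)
The paper states Lemma 4.2 without any proof, so there is no argument of the author's to compare yours against; judged on its own, your construction is correct and is evidently the intended one, since it produces exactly the trace-word shapes A and B (pairs alternating between $N^{(2)}_l$ and $N^{(2)}_r$, with one $N^{(1)}$ pair at the centre) that the paper uses immediately after the lemma in the description of the ATM. Two small points are worth making explicit. First, your invariant should be (and, as you state it, is) that every binary right-hand side contains a nonterminal \emph{all} of whose productions are terminal rules; this matters because an original ELG nonterminal may carry both a terminal rule $X\rightarrow c$ and a binary rule, in which case Step 1 of Theorem 2.2 strips the terminal rule from $X$, and the retained rule $W\rightarrow XR_b$ must find its witness in the fresh $R_b$ rather than in $X$. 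Your choice of fresh $L_a$, $R_b$, $T_{c_j}$ with a single terminal production each guarantees this. Second, for the partition $N_k-\{S\}=N^{(1)}\cup N^{(2)}\cup N^{(3)}$ to be meaningful one should also remove useless symbols so that every non-axiom nonterminal actually occurs in some bracket pair, and one should note that $\lambda$-elimination is harmless here because the Dyck normal form (like the Chomsky normal form) does not generate $\lambda$ anyway. With those remarks your proof is complete.
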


An indexing alternating Turing machine (ATM) is an ATM that is allowed to write any binary number 
on a special tape, called {\it index} tape. This number is seen as an address of a location on the 
input tape. With $i$, written in binary on the index tape, the ATM can read the symbol placed on 
the $i^{th}$ cell of the input tape. Using universal states to relate different branches on the 
computation, an indexing ATM can read an input of length $n$, in $\cal O$($\log n$) time. For the formal 
definition and complexity results on ATMs the reader is referred to \cite{BDG}, \cite{CKS}, and \cite{V}. 

Let $G_k=(N_k, T, P_k, S)$ be an ELG in Dyck normal form, and $L(G_k)$ be the language generated by $G_k$. Next we 
sketch an ATM $\cal A$ that decides whether a word $w=a_1a_2...a_n\in T^*$ belongs to $L(G_k)$. Let us 
consider that $n=2p+1$ (the case when $n$ is an even number is similar). The trace-word $t_{v,D}$ 
associated with a word of odd length, $v\in L(G_k)$, $v=v_1v_2...v_{2p+1}$, is of the form A or B, sketched below\vspace{0.1cm}\\

\textbf{A.}\\ 
{\small \hspace*{0.1cm}
\centerline{$t_{x,D}$=\noindent \begin{tabular}{llllllllllllllll}
$\textbf{\small{\hspace{0.1cm}[}}_{j_1}$&$\hspace{-0.1cm}\textbf{\small{]}}_{j_1}$&$\hspace{-0.1cm}[_{i_1}$&$\hspace{-0.1cm}\textbf{\small{[}}_{j_2}$&
$\hspace{-0.2cm}\textbf{\small{]}}_{j_2}$&$\hspace{-0.1cm}[_{i_2}...$&$\hspace{-0.1cm}[_{i_{p-1}}$&$\hspace{-0.1cm}\textbf{\small{[}}_{j_p}$&$
\textbf{\hspace{-0.1cm}\small{]}}_{j_p}$&$\hspace{-0.1cm}[_{i_p}$&$\hspace{-0.1cm}\textbf{\small{[}}^t_{j_{p+1}}$&
$\hspace{-0.2cm}\textbf{\small{]}}^t_{j_{p+1}}$&$\hspace{-0.3cm}]_{i_p}$&$\hspace{-0.1cm}]_{i_{p-1}}...$&$\hspace{-0.1cm}]_{i_2}$&$\hspace{-0.1cm}]_{i_1}$\\
$\hspace{0.2cm}\downarrow$&$\hspace{-0.1cm}\downarrow$&$\hspace{-0.1cm}\downarrow$&$\hspace{-0.1cm}\downarrow$&$\hspace{-0.2cm}\downarrow$&
$\hspace{-0.1cm}\downarrow$&$\hspace{-0.1cm}\downarrow$&$\hspace{-0.1cm}\downarrow$&$\hspace{-0.1cm}\downarrow$&$\hspace{-0.1cm}\downarrow$&
$\hspace{-0.1cm}\downarrow$&$\hspace{-0.1cm}\downarrow$&$\hspace{-0.3cm}\downarrow$&$\hspace{-0.2cm}\downarrow$&$\hspace{-0.1cm}\downarrow$&
$\hspace{-0.1cm}\downarrow$\\
$\hspace{0.1cm}v_1$&$\hspace{-0.2cm}\lambda$&$\hspace{-0.1cm}\lambda$&$\hspace{-0.1cm}v_2$&$\hspace{-0.1cm}\lambda$&$\hspace{-0.1cm}\lambda...$
&$\hspace{-0.1cm}\lambda$&$\hspace{-0.1cm}v_p$&$\hspace{-0.1cm}\lambda$&$\hspace{-0.1cm}\lambda$&$\hspace{-0.2cm}v_{p+1}$&
$\hspace{-0.2cm}v_{p+2}$&$\hspace{-0.3cm}v_{p+3}$&$\hspace{-0.1cm}v_{p+4}...$&$\hspace{-0.2cm}v_{2p}$&$\hspace{-0.2cm}v_{2p+1}$\\
\end{tabular}}\vspace{0.1cm}}\\

\textbf{B.}\\ 
{\small
\centerline{$t_{x,D}$ =\noindent \begin{tabular}{llllllllllllllll}
$\textbf{\small{\hspace{0.1cm}[}}_{j_1}$&$\hspace{0.1cm}\textbf{\small{]}}_{j_1}$&$\hspace{0.1cm}[_{i_1}$&$\hspace{0.1cm}\textbf{\small{[}}_{j_2}$&
$\hspace{0.1cm}\textbf{\small{]}}_{j_2}$&$\hspace{0.1cm}[_{i_2}...$&$\hspace{0.1cm}[_{i_p}$&
&$\hspace{-0.7cm}\textbf{\small{[}}_{j_p}$&$\textbf{\hspace{-0.8cm}\small{]}}_{j_p}$&
$\hspace{-1.2cm}\textbf{\small{[}}^t_{j_{p+1}}$& 
$\hspace{-1.2cm}\textbf{\small{]}}^t_{j_{p+1}}$&$\hspace{-1.5cm}]_{i_p}\hspace{0.1cm}...$&$\hspace{-1cm}]_{i_2}$&$\hspace{-1.4cm}]_{i_1}$\\
$\hspace{0.1cm}\downarrow$&$\hspace{0.1cm}\downarrow$&$\hspace{0.1cm}\downarrow$&$\hspace{0.1cm}\downarrow$&$\hspace{0.1cm}\downarrow$&
$\hspace{0.1cm}\downarrow$&$\hspace{0.1cm}\downarrow$&$\hspace{-0.2cm}\downarrow$&$\hspace{0.1cm}\downarrow$&\hspace{0.2cm}
$\downarrow$&$\hspace{0.2cm}\downarrow$&$\downarrow$&$\hspace{-0.2cm}\downarrow$&
$\hspace{-0.2cm}\downarrow$\\
$\hspace{0.1cm}v_1$&$\hspace{0.1cm}\lambda$&$\hspace{0.1cm}\lambda$&$\hspace{0.1cm}v_2$&$\hspace{0.1cm}\lambda$&$\hspace{0.1cm}\lambda\hspace{0.1cm}...$
&$\hspace{0.1cm}\lambda$&$\hspace{-0.2cm}v_p$&$\hspace{0.2cm}\lambda$&$\hspace{0.2cm}v_{p+1}$&
$\hspace{0.1cm}v_{p+2}$&$\hspace{-0.2cm}v_{p+3}\hspace{0.1cm}...$&$\hspace{-0.2cm}v_{2p}$&$\hspace{-0.1cm}v_{2p+1}$\\
\end{tabular}}\vspace{0.1cm}}\\
where ($[_{j_k}$, $]_{j_k}$) $\in N_l^{(2)}$, ($[_{i_k}$, $]_{i_k}$) $\in N_r^{(2)}$, $1\leq k \leq p$, 
($[^t_{j_{p+1}}$, $]^t_{j_{p+1}}$) $\in N^{(1)}$, and each down arrow represents the homomorphism $\varphi$.

Let $Q_1$ be the quotient and $R_1$ the remainder of $p$ divided\footnote{By $\left[a\right]$ we denote the largest integer not greater 
than $a$, where $a$ is a real number.} by $\left[\log p\right]$. Dividing $Q_1$ by $\left[\log p\right]$ a new quotient $Q_2$ and remainder 
$R_2$ are obtained. If this ``iterated'' division is performed until the resulted quotient, denoted by $Q_\ell$, can be no longer divided by 
$\left[\log p\right]$, then $p$ is $p=((...((Q_\ell\left[\log p\right]+R_\ell)\left[\log p\right]+R_{\ell-1})\left[\log p\right]+...)\left[\log p\right]+R_2)\hspace{-0.1cm}\left[\log p\right]+R_1$, $1\leq Q_\ell<\hspace{-0.1cm}\log p$, 
$0\leq R_l<\hspace{-0.1cm}\log p$, $l\in\{1,...,\ell\}$,  and $\ell<\log p$. 

Knowing $R_1$, $\cal A$ guesses an $R_1$-tuple of left brackets in $N_l^{(2)}$ and it checks whether brackets in $R_1$ can produce, 
according to the derivation in $G_k$, the first $R_1$ terminal symbols $a_1, a_2,..., a_{R_1}$ in $w$, and whether the right 
pairwise bracket of each left bracket in the $R_1$-tuple produces a right bracket in $N_r^{(2)}$ that at its turn produces the terminal 
symbol opposed, in $w$, to the terminal symbol produced by the left bracket in $R_1$. $\cal A$ guesses a $\left[\log p\right]$-tuple 
of left brackets in $N_l^{(2)}$ supposed to produce those terminals placed at the $\left[\log p\right]$ cutting points in $w$ obtained by 
dividing $[R_1+1...p]$ into $\left[\log p\right]$ intervals of length $Q_1$. 

In parallel, $\cal A$ checks whether there exists an $R_2$-tuple of left brackets in $N_l^{(2)}$ that generate the first $R_2$ terminal 
strings in each $Q_1$-interval in $a_1a_2...a_p$ according to the derivation in $G_k$, and whether the right pairwise bracket of each left 
bracket in the $R_2$-tuple produces a right bracket in $N_r^{(2)}$ that at its turn produces the terminal symbol opposed, in $w$, to the 
terminal symbol produced by the left bracket in $R_2$. In parallel for each $Q_1$-interval, $\cal A$ 
guesses another $\left[\log p\right]$-tuple of left brackets in $N_l^{(2)}$ supposed to produce those terminals placed at the $\left[\log p\right]$ 
cutting points in $w$, obtained by dividing each interval of length $Q_1-R_2$ into $\left[\log p\right]$ intervals of length $Q_2$. This procedure 
is repeated until intervals of length $Q_\ell <\log p$ are obtained. At this point, $\cal A$ checks whether the substrings of $w$ corresponding 
to the $Q_\ell$-intervals in $a_1a_2...a_p$ produced by left brackets in $N_l^{(2)}$, and their corresponding opposite substrings in 
$a_{p+1}a_{p+2}...a_{2p}a_{2p+1}$, produced by right brackets in $N_r^{(2)}$,  are valid according to the derivation in $G_k$. If correct 
$\left[\log p\right]$-tuples of left brackets in $N_l^{(2)}$ can be found for all intervals and all cutting points, then $w\in L(G_k)$.  
We have        

\begin{theorem}
Each language $L\in ELIN$ can be accepted by an indexing alternating Turing machine in 
$\cal O$($\log n$) space, $\cal O$($\log^2n$) time, and a logarithmic number of alternations. 
\end{theorem}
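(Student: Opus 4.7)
The plan is to formalize the algorithmic sketch preceding the theorem into a precise divide-and-conquer computation on an indexing ATM, and then to verify the three complexity bounds.

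First, by Lemma 4.2, I may assume $L=L(G_k)$ for some $G_k$ in Dyck normal form with $N^{(3)}=\emptyset$. Under this hypothesis every bracket pair in $N_k-\{S\}$ contributes at least one terminal: each pair in $N^{(2)}$ contributes exactly one symbol to $w$, and the innermost pair in $N^{(1)}$ contributes the central two or three symbols, so the shapes A and B exhaust all possible trace-words. By Theorem 3.8, deciding $w\in L(G_k)$ reduces to producing, at each rank $i\leq p$, a left bracket $[_{j_i}\in N^{(2)}_l$ that derives $a_i$ together with its unique right pairwise $]_{j_i}\in N^{(2)}_r$ that derives the opposite symbol $a_{n-i+1}$, all consistent with the productions of $G_k$, plus a single innermost $N^{(1)}$-bracket handling the middle of $w$.

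Next I describe $\cal A$ more carefully. On input $w$ of length $n$, $\cal A$ first computes $p$, the iterated remainders $R_1,\ldots,R_\ell$, the quotients $Q_1,\ldots,Q_\ell$, and $\ell$ in $\cal O$($\log n$) space by repeated binary division. It then enters a recursion on a current interval $[r,s]\subseteq[1,p]$ that carries a guessed entering bracket. An existential block guesses a $\left[\log p\right]$-tuple of left brackets in $N^{(2)}_l$ supposed to be derived at the $\left[\log p\right]$ cutting points inside $[r,s]$, together with the brackets at the corresponding mirror positions in the right half of $w$ and the handling of the residue $R_i$. A universal block then spawns one branch per sub-interval and per residue check, and each branch recurses on its sub-interval. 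At the base case, when the current interval has length below $\log p$, $\cal A$ uses the index tape to read the relevant input symbols in $\cal O$($\log n$) time and verifies directly that the guessed brackets and their pairwise partners derive the corresponding terminals according to $P_k$; for the innermost $N^{(1)}$-pair it performs the analogous two- or three-symbol check.

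The complexity of $\cal A$ is bounded as follows. At each recursive level the work consists of writing $\cal O$($\log n$) constant-size brackets and position indices onto the tapes, followed by $\cal O$($\log n$)-time indexing verifications; this contributes $\cal O$($\log n$) time and a constant number of alternations per level, using only $\cal O$($\log n$) space, since earlier guesses along a branch are discarded once the corresponding universal block has been entered. The recursion depth is $\ell<\log p$, so the totals are $\cal O$($\log^2 n$) time, $\cal O$($\log n$) alternations, and $\cal O$($\log n$) space, as claimed. The step I expect to be the main obstacle is specifying the recursion invariant precisely: the existentially guessed tuples parametrize only left brackets in $N^{(2)}_l$, yet must simultaneously constrain the derivation of the mirror-image right half of $w$ through the bijective pairing $[_j\leftrightarrow ]_j$. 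One must check that this pairing remains consistent across levels and that the transition from a guessed cutting-point bracket to the starting bracket of the next sub-interval respects the linear structure of $G_k$ and the nested shape forced by Theorem 3.8. Once this invariant is pinned down, soundness is immediate from the nested trace-word characterization and completeness follows by induction on the interval length.
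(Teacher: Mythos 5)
Your overall architecture coincides with the paper's: the same iterated division of $p$ by $\left[\log p\right]$ into quotients $Q_1,\ldots,Q_\ell$ and remainders $R_1,\ldots,R_\ell$, the same alternation of existential tuple-guesses at cutting points with universal interval-splitting, the same rule-level check that adjacent left brackets $[_{j_k}, [_{j_{k+1}}$ and their pairwise partners are linked by productions of $P_k$ (which is what enforces the mirror-image constraint on the right half of $w$), and the same accounting of $\cal O$($\log n$) levels times $\cal O$($\log n$) work per level. The genuine gap is exactly the step you postpone with ``once this invariant is pinned down, soundness is immediate.'' That invariant cannot be pinned down by a per-level local check, and the paper's proof is mostly devoted to circumventing this. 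The correctness of a cutting-point bracket $[_{j_{R_1+i_1Q_1}}$ requires knowing the bracket at position $R_1+i_1Q_1-1$, i.e., the last bracket of the \emph{preceding} sub-interval, which is only guessed at the very bottom of the recursion on that other sub-interval. Hence an existential branch cannot be assigned a truth value at the moment its universal block returns. The paper resolves this by labeling such branches only ``partially correct'' ($\diamond$), by arranging that each cutting-point bracket is redirected to exactly one bottom-level process (each process $\wp^{(Q_u)}_{i_u}$ carries down the right-edge bracket of its interval, so the decision on each $[_{j_{P_\ell^u}}$ is ultimately taken by a process of type $\wp^{(Q_\ell)}_{\left[\log p\right]-1}$), and by propagating the resolved values back up the tree to replace the $\diamond$'s before the root is evaluated.

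Without this deferred-evaluation mechanism your recursion is not well-defined as an ATM computation: acceptance of an alternating machine is determined bottom-up by the AND/OR structure of the configuration tree, so you must specify the value of each existential branch at the time its subtree is evaluated, and your local verification cannot do that for the cutting-point brackets. Relatedly, your claim that earlier guesses are discarded as soon as the corresponding universal block is entered is slightly too strong: both boundary brackets of each interval (not only the entering one) must be carried down the recursion for precisely this reason, although this still fits within $\cal O$($\log n$) space. The remainder of your argument --- the reduction via Lemma 4.2, the trace-word shapes A and B, and the time/space/alternation accounting --- matches the paper.
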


\begin{proof}
Let $G_k=(N_k, T, P_k, S)$ be an arbitrary ELG in Dyck normal form, and $N_k=\{S, [_1,...,[_k, \hspace{0.1cm}]_1,...,\hspace{0.1cm}]_k \}$. 
Let $\cal A$ be an indexing ATM composed of an input tape that stores an input word $w\in T^*$ of length $n$, $w=a_1a_2...a_n$,  
an index tape, and a working tape composed of two tracks initially empty.\vspace{0.1cm}\\ 
\textbf{Level 1}(\textit{Existential}) In an existential state $\cal A$ guesses 
the length of the input string, and verifies the correctness of this guess. 
This is performed by writing, for each existential branch, on the index tape the 
binary value of $n$, and checking afterwards whether the $n^{th}$ cell of the 
input tape contains a terminal symbol and the $(n+1)^{th}$ cell contains no symbol. 
The correct value of $n$ is recorded in binary on the first track of the working tape of $\cal A$.\vspace{0.1cm}\\ 
\textbf{Level 2} (\textit{Existential}) Consider the quotient $Q_1$ and  the remainder $R_1$ of the division of $p$ by 
$\left[\log p\right]$, $0\leq R_1\hspace{-0.1cm}<\left[\log p\right]$. 
$\cal A$ spawns $\cal O$($c^{\log p}$) existential branches, 
each branch holding an $R_1$-tuple of brackets $\Re_{R_1}=([_{j_1},...,[_{j_k},..., [_{R_1})$, where each $[_{j_k}$, $1\leq k \leq R_1$, 
is a right bracket in $N_l^{(2)}$ supposed to produce\footnote{The $R_1$-tuple makes a guess of the first $R_1$ right brackets 
in $t_{w,D}$, such that the homeomorphic image of $[_{j_1}...[_{j_k}...[_{j_{R_1}}$ through $\varphi$ to be the prefix $a_1...a_k...a_{R_1}$. 
Certainly there exist $\cal O$($c^{\log p}$) such guesses, where $c=|N_r^{(2)}|$.} $a_k$. Then $\cal A$ universally checks (Level 3) whether 
each bracket in $\Re_{R_1}$ is a correct guess.\vspace{0.1cm}\\
\textbf{Level 3} (\textit{Universal}) $\cal A$ universally branches all substring $a_ka_{k+1}$, $1\leq k \leq R_1-1$, and checks 
whether there exist the rules $[_{j_k}\rightarrow a_k$, $[_{j_{k+1}}\rightarrow a_{k+1}$, $]_{j_k}\rightarrow [_{i_k}\hspace{0.1cm}]_{i_k}$, 
$[_{i_k}\rightarrow [_{j_{k+1}}\hspace{0.1cm}]_{j_{k+1}}$, and $]_{i_k}\rightarrow a_{2p-k+2}$ in $P$. If these conditions hold for each 
universal branch, then $\Re_{R_1}$ is a correct guess and the existential branch from Level 2, that spawned $\Re_{R_1}$, will be labeled 
by $1$.\vspace{0.1cm}\\   
\textbf{Level 4} (\textit{Existential}) Let $Q_2$ be the quotient and $R_2$ the remainder of $Q_1$ divided by $\left[\log p\right]$, 
$0\leq R_2<\left[\log p\right]$. $\cal A$ spawns $\cal O$($c^{\log p}$) existential branches, each branch holding a 
$2\left[\log n\right]$-tuple $\Re^c_{R_2}\hspace{-0.1cm}=\hspace{-0.1cm}([_{j_{R_1}}, [_{j_{R_1+R_2}}, [_{j_{R_1+Q_1}}, [_{j_{R_1+Q_1+R_2}}, 
...,[_{j_{R_1+(\left[\log n\right]-1)Q_1}}, [_{j_{R_1+(\left[\log n\right]-1)Q_1+R_2}})$,  where $[_{j_{R_1}}$ is the left bracket in 
$N_l^{(2)}$, belonging to the tuple $\Re_{R_1}$ found correct at Level 3, and each $[_{j_{R_1+i_1Q_1}}$ and $[_{j_{R_1+i_1Q_1+R_2}}$ is a bracket 
in $N_l^{(2)}$ supposed to produce the terminal symbols placed at the ``cutting points''  $R_1+i_1Q_1$ and $R_1+i_1Q_1+R_2$, respectively, i.e., $a_{j_{R_1+i_1Q_1}}$ and $a_{j_{R_1+i_1Q_1+R_2}}$, where $0\leq i_1 \leq \left[\log p\right]-1$. Because $\Re_{R_1}$ is no more needed, the 
space used by $\cal A$ to record $\Re_{R_1}$ is allocated now to record $\Re^c_{R_2}$.\vspace{0.1cm}\\
\textbf{Level 5} (\textit{Universal}) On each existential branch from Level 4, $\cal A$ spawns $\left[\log p\right]$ universal 
processes $\wp^{(Q_1)}_{i_1}$, $0\leq i_1 \leq \left[\log p\right]-1$. Each process $\wp^{(Q_1)}_{i_1}$ takes the interval 
$[R_1+i_1Q_1...R_1+i_1Q_1+R_2]$, and it checks whether the brackets $[_{j_{R_1+i_1Q_1}}$ and $[_{j_{R_1+i_1Q_1+R_2}}$, 
 $1\leq i_1 \leq \left[\log n\right]-1$, are indeed correct guesses for $a_{j_{R_1+i_1Q_1}}$ and $a_{j_{R_1+i_1Q_1+R_2}}$, respectively.  
Besides $[_{j_{R_1+i_1Q_1}}$ and $[_{j_{R_1+i_1Q_1+R_2}}$, each $\wp^{(Q_1)}_{i_1}$ also keeps, from the previous level, the bracket 
$[_{j_{R_1+(i_1+1)Q_1}}$. In this way each bracket $[_{j_{R_1+i_1Q_1}}$, $1\leq i_1 \leq \left[\log p\right]-1$, is redirected to only one 
process, i.e., $\wp^{(Q_1)}_{i_1-1}$. \vspace{0.1cm}\\
\textbf{Level 6} (\textit{Existential}) For each universal process $\wp^{(Q_1)}_{i_1}$, $0\leq i_1 \leq \left[\log n\right]-1$, $\cal A$ 
spawns $\cal O$($c^{\log p}$) existential branches, each branch holding an $(R_2+1)$-tuple of brackets 
$\Re_{R_2}\hspace{-0.1cm}=([_{j_{R_1+i_1Q_1}}, [_{R_1+i_1Q_1+1},...,[_{R_1+i_1Q_1+R_2-1}, [_{R_1+i_1Q_1+R_2})$. 
Then $\cal A$ checks whether all brackets composing $\Re_{R_2}$ are correct guesses. This can be done, for each process 
$\wp^{(Q_1)}_{i_1}$ through Level 7 (similar to Level 3) as follows. \vspace{0.1cm}\\
\textbf{Level 7} (\textit{Universal}) For each existential branch spawned at Level 6, $\cal A$ universally branches all substring 
$a_ka_{k+1}$, $R_1+i_1Q_1\leq k \leq R_1+i_1Q_1+R_2-1$, and checks whether there exist the rules $[_{j_k}\rightarrow a_k$, 
$[_{j_{k+1}}\rightarrow a_{k+1}$, $]_{j_k}\rightarrow [_{i_k}\hspace{0.1cm}]_{i_k}$, $[_{i_k}\rightarrow [_{j_{k+1}}\hspace{0.1cm}]_{j_{k+1}}$, 
$]_{i_k}\rightarrow a_{2p-k+2}$ in $P$. If these conditions hold for each universal branch, then the corresponding tuple $\Re_{R_2}$ is 
said {\it partially correct} and the existential branch from Level 6, that spawned $\Re_{R_2}$, will be labeled by $\diamond$.\vspace{0.1cm}\\
Note that, at this moment we cannot decide whether the existential branch that spawned $\Re_{R_2}$ can be labeled by $1$, even if all 
universal branches at this level return the true value. 

This is because we do not know yet, whether the bracket $[_{j_{R_1+i_1Q_1}}$ 
is a correct guess, i.e., whether rules of the form $]_{j_{R_1+i_1Q_1-1}}\rightarrow [_{i_{R_1+i_1Q_1-1}}\hspace{0.1cm}]_{i_{R_1+i_1Q_1-1}}$,
$[_{i_{R_1+i_1Q_1-1}}\rightarrow [_{j_{R_1+i_1Q_1}}\hspace{0.1cm}]_{j_{R_1+i_1Q_1}}$, $[_{j_{R_1+i_1Q_1-1}}\rightarrow a_{R_1+i_1Q_1-1}$,  
$]_{i_{R_1+i_1Q_1-1}}\rightarrow a_{2p-(R_1+i_1Q_1-1)+2}$ exist in $P$, since the bracket $[_{j_{R_1+i_1Q_1-1}}$ has not been guessed\footnote{The
left bracket $[_{j_{R_1+i_1Q_1-1}}$ will be guessed at the last level of the computation tree of $\cal A$, when all the remainders of the ``iterated'' 
division of $p$ by $\left[\log p\right]$ will be spent, and when $[_{j_{R_1+i_1Q_1-1}}$ will be the last bracket occurring in the suffix of length 
$Q_\ell$ of the subword $a_{R_1+(i_1-1)Q_1}...a_{R_1+i_1Q_1-1}$ of $w$.} yet. The logical value of each tuple $\Re_{R_2}$ will be decided at the end 
of computation, when it will be known whether bracket $[_{j_{R_1+i_1Q_1-1}}$ is a correct guess, with respect to the other left brackets guessed for 
the subword $a_{R_1+(i_1-1)Q_1}...a_{R_1+i_1Q_1-1}$. If $\Re_{R_2}$ is not partially correct, it is labeled by $0$. 

Suppose that we have run the algorithm up to the $(\ell-1)^{th}$ ``iterated'' division of $p$ by $\left[\log p\right]$, i.e., we know the 
quotient $Q_{\ell-1}$ and the remainder $R_{\ell-1}$ of $Q_{\ell-2}$ divided by $\left[\log p\right]$. \vspace{0.1cm}\\
\textbf{Level 4$(\ell-1)$} (\textit{Existential}) Let $Q_\ell$ be the quotient and $R_\ell$ the remainder of $Q_{\ell-1}$ divided by 
$\left[\log p\right]$, $0\leq Q_\ell, R_\ell<\left[\log p\right]$. Since $Q_{\ell-2}$, $R_{\ell-2}$ and $R_{\ell-1}$ are no more needed, 
the space used to record them is now used to record $Q_\ell$ and $R_\ell$ in binary, still keeping $Q_{\ell-1}$. Denote by 
$x_{i_{\ell-2}}=\sum_{l=1}^{\ell-1}R_l+\sum_{l=1}^{\ell-2}i_lQ_l$. For each existential branch labeled by $\diamond$ at Level $4\ell-6$, 
$\cal A$ spawns $\cal O$($c^{\log p}$) existential branches, each branch holding a $2\left[\log p\right]$-tuple 
$\Re^c_{R_\ell}=([_{j_{x_{i_{\ell-2}}}}, [_{j_{x_{i_{\ell-2}}+\hspace{0.1cm}R_\ell}}, [_{j_{x_{i_{\ell-2}}+\hspace{0.1cm}Q_{\ell-1}}}, 
[_{j_{x_{i_{\ell-2}}+Q_{\ell-1}+R_\ell}},...,[_{j_{x_{i_{\ell-2}}+(\left[\log n\right]-1)Q_{\ell-1}}}, [_{j_{x_{i_{\ell-2}}+(\left[\log n\right]-1)Q_{\ell-1}+R_\ell}})$, 
where $[_{j_{x_{i_{\ell-2}}}}$ is the left bracket belonging to tuple $\Re_{R_{\ell-1}}$ found correct at Level 4$\ell-5$. Because 
$\Re_{R_{\ell-1}}$ is no more needed the space used to record $\Re_{R_{\ell-1}}$ is allocated now to record 
$\Re^c_{R_\ell}$. Then $\cal A$ proceeds with Level 4$\ell-3$, similar to Levels 5,..., 4$\ell-7$. \vspace{0.1cm}\\
\textbf{Level 4$\ell-3$} (\textit{Universal}) On each existential branch spawned at Level 4$(\ell-1)$, $\cal A$ spawns $\left[\log p\right]$ 
universal processes $\wp^{(Q_{\ell-1})}_{i_{\ell-1}}$\hspace{-0.1cm}, $0\hspace{-0.1cm}\leq \hspace{-0.1cm}i_{\ell-1}\hspace{-0.1cm}\leq \hspace{-0.1cm}\left[\log p\right]\hspace{-0.1cm}-\hspace{-0.1cm}1$. Denote by $x_{i_{\ell-1}}\hspace{-0.1cm}=\hspace{-0.1cm}x_{i_{\ell-2}}
\hspace{-0.1cm}+i_{\ell-1}Q_{\ell-1}$,  $0\hspace{-0.1cm}\leq \hspace{-0.1cm}i_{\ell-1}\hspace{-0.1cm}\leq \hspace{-0.1cm}\left[\log p\right]\hspace{-0.1cm}-1$. Each  $\wp^{(Q_{\ell-1})}_{i_{\ell-1}}$ takes the interval 
$[x_{i_{\ell-1}}...x_{i_{\ell-1}}+R_\ell]$, and checks whether the left brackets (spawned at Level 4$(\ell-1)$) 
$[_{j_{x_{i_{\ell-1}}}}$ and $[_{j_{x_{i_{\ell-1}}+R_\ell}}$  are correct guesses. Besides $[_{j_{x_{i_{\ell-1}}}}$ and $[_{j_{x_{i_{\ell-1}}+R_\ell}}$, 
each $\wp^{(Q_{\ell-1})}_{i_{\ell-1}}$, also keeps from the previous level the left bracket $[_{j_{x_{i_{\ell-2}}+(i_{\ell-1}+1)Q_{\ell-1}}}$. 
Then $\cal A$ continues with Level 4$\ell-2$, similar to Levels 6,..., 4$\ell-6$. \vspace{0.1cm}\\
\textbf{Level 4$\ell-2$} (\textit{Existential}) For each universal process $\wp^{(Q_{\ell-1})}_{i_{\ell-1}}$, 
$0\leq {i_{\ell-1}} \leq \left[\log p\right]-1$, $\cal A$ spawns $\cal O$($c^{\log p}$) existential branches, each branch 
holding an $(R_\ell+1)$-tuple of brackets $\Re_{R_\ell}\hspace{-0.1cm}=\hspace{-0.1cm}([_{j_{x_{i_{\ell-1}}}},
[_{j_{x_{i_{\ell-1}}+1}},...,[_{j_{x_{i_{\ell-1}}+R_\ell-1}}, [_{j_{x_{i_{\ell-1}}+R_\ell}})$. Then $\cal A$ checks whether 
all brackets composing $\Re_{R_\ell}$ are correct.  This can be done, for each process $\wp^{(Q_{\ell-1})}_{i_{\ell-1}}$, through 
Level 4$\ell-1$ similar to Levels 3, 7,..., 4$\ell-5$.\vspace{-0.3cm}\\

At this point the only substrings of $w$ left unchecked are those substrings that corresponds to the intervals of the form 
$I_{Q_{\ell-1}}$ = $[\sum_{l=1}^{\ell-1}R_l+\sum_{l=1}^{\ell-2}i_lQ_l+i_{\ell-1}Q_{\ell-1}+R_\ell... 
\sum_{l=1}^{\ell-1}R_l+\sum_{l=1}^{\ell-2}i_lQ_l +(i_{\ell-1}+1)Q_{\ell-1}]$,  $0\leq i_l\leq \left[\log n\right]-1$, 
$1\leq l \leq \ell-1$, and besides the cutting points $P_\ell^u=\sum_{l=1}^uR_l+\sum_{l=1}^{u-1}i_lQ_l+(i_u+1)Q_u$,  
$1\leq u\leq \ell-1$. On each interval of type $I_{Q_{\ell-1}}$, $\cal A$ proceeds with Level 4$\ell$. \vspace{0.1cm}\\
\textbf{Level $4\ell$} (\textit{Existential}) Each interval $I_{Q_{\ell-1}}$ can be divided into 
$\left[\log p\right]$ subintervals of length $1\leq Q_\ell<\hspace{-0.1cm}\left[\log p\right]$. Hence, $\cal A$ spawns 
$\cal O$($c^{\log p}$) existential branches each of which holds a $\left[\log p\right]$-tuple  $\Re^c_{Q_\ell}\hspace{-0.1cm}=\hspace{-0.1cm}([_{j_{x_{i_{\ell-1}}\hspace{-0.1cm}+R_\ell}}, [_{j_{x_{i_{\ell-1}}+R_\ell+Q_\ell}},...,
[_{j_{x_{i_{\ell-1}}\hspace{-0.1cm}+R_\ell+(\left[\log n\right]-1)Q_\ell}})$, where $[_{j_{x_{i_{\ell-1}}+R_\ell}}$ is the left 
bracket found valid at Level $4\ell-1$.\vspace{0.1cm}\\
\textbf{Level $4\ell+1$} (\textit{Universal}) For each existential branch spawned at Level $4\ell$, $\cal A$ spawns 
$\left[\log p\right]$ universal processes $\wp^{(Q_\ell)}_{i_\ell}$, $0\leq i_\ell\leq \left[\log p\right]-1$. Each 
$\wp^{(Q_\ell)}_{i_\ell}$ takes an interval of length $Q_\ell$ of the form $[x_{i_\ell}...x_{i_\ell+1}]$, where  $x_{i_\ell}\hspace{-0.1cm}=\hspace{-0.1cm}\sum_{l=1}^{\ell}R_l+\hspace{-0.1cm}\sum_{l=1}^{\ell-1}i_lQ_l+i_{\ell}Q_{\ell}$, 
$0\hspace{-0.1cm}\leq i_\ell\hspace{-0.1cm}\leq \left[\log n\right]-1$. On each interval $[x_{i_\ell}...x_{i_\ell+1}]$,  $\cal A$ 
checks (Level $4\ell+2$) whether $a_{x_{i_\ell}}...$ $a_{x_{i_\ell+1}}$ and its opposite substring in $w$, 
$a_{2p-x_{i_\ell+1}+2}...$ $a_{2p-x_{i_\ell}+2}$ can be generated by $G_k$.\\
\textbf{Level $4\ell+2$} (\textit{Existential}) For each $\wp^{(Q_{\ell})}_{i_{\ell}}$, $0\leq {i_{\ell}} \leq \left[\log p\right]-1$, 
$\cal A$ spawns $\cal O$($c^{\log p}$) existential branches, each branch holding an $(Q_\ell+1)$-tuple of left brackets   
$\Re_{Q_\ell}\hspace{-0.1cm}=\hspace{-0.1cm}([_{j_{x_{i_\ell}}}\hspace{-0.1cm}, [_{j_{x_{i_\ell}+1}},...,[_{j_{x_{i_\ell}+Q_\ell-1}},$\\$
[_{j_{x_{i_\ell+1}}})$. In each $\Re_{Q_\ell}$ brackets $[_{j_{x_{i_\ell}}}$ and $[_{j_{x_{i_\ell+1}}}$ have been guessed 
at Level 4$\ell$. They are brackets supposed to produce the terminal symbols placed in cutting points, i.e., edges of intervals 
of length $\left[\log p\right]$. These  cutting points are also overlapping points of two consecutive intervals of type  
$[x_{i_\ell}...x_{i_\ell+1}]$. Hence, each bracket $[_{j_{x_{i_\ell}}}$  is checked two times. First, if $[_{j_{x_{i_\ell}}}$ can 
be followed in the trace-word by $[_{j_{x_{i_\ell+1}}}$, producing $a_{x_{i_\ell}}$, and there exists a right bracket $]_{i_{x_{i_\ell}}}$ 
producing $a_{2p-x_{i_\ell}+2}$ (this is checked by $\wp^{(Q_{\ell})}_{i_{\ell}}$). Second, if $[_{j_{x_{i_\ell}}}$ can be preceded 
in the trace-word by $[_{j_{x_{i_{\ell}-1}}}$, producing $a_{x_{i_\ell}}$, and there exists a right bracket $]_{i_{x_{i_{\ell}-1}}}$ 
producing $a_{2p-(x_{i_{\ell}-1})+2}$ (this is checked by process $\wp^{(Q_{\ell})}_{i_{\ell}-1}$).\vspace{-0.4cm}\\  

As all intervals $[x_{i_{\ell}}...x_{i_{\ell}+1}]$ are universally checked, $\Re^c_{Q_\ell}$ is correct (and the 
corresponding existential branch is labeled by $1$) if all brackets in  $\Re_{Q_\ell}$ and all $[_{j_{x_{i_{\ell}}}}$, 
$0\leq i_\ell\leq \left[\log p\right]-1$, are correct guesses. To check whether all left brackets in $\Re_{Q_\ell}$ are correct, 
$\cal A$ follows the same procedure, described at Levels 3,..., $4\ell-1$. 

It can be proved that each cutting point $P_\ell^u=\sum_{l=1}^uR_l+\sum_{l=1}^{u-1}i_lQ_l+(i_u+1)Q_u$ 
yielded at Level $4u$ by $\Re^c_{R_{u+1}}$, $1\leq u \leq \ell-1$, is in fact the right edge of an interval of the form  
$[\sum_{l=1}^{\ell}R_l+\sum_{l=1}^{\ell-1}i_lQ_l+i_{\ell}Q_{\ell}...\sum_{l=1}^{\ell}R_l+\sum_{l=1}^{\ell-1}i_lQ_l+(i_\ell+1)Q_\ell]
=[x_{i_\ell}...x_{i_\ell+1}]$ for which $0\leq i_l\leq \left[\log n\right]-1$, $1\leq l \leq \ell-1$, $i_\ell=\left[\log \right]-1$.
Hence, the decision on the correctness of each left bracket  $[_{j_{P_\ell^u}}$ will be actually taken by a process of type 
$\wp^{(Q_\ell)}_{\left[\log p\right]-1}$. Since the validity of each cutting point is decided by a process 
of type $\wp^{(Q_\ell)}_{\left[\log n\right]-1}$, the logical value returned by this process is "propagated" up to the level of the 
computation tree that has spawned the corresponding cutting point, and thus each $\diamond$ symbol receives a logical value. The input 
is accepted, if going up in the computation tree, with all $\diamond$'s changed into logical values, the root of the tree is labeled by $1$. 

The tuples $\Re_{R_\hbar}$, $\Re^c_{R_\hbar}$, $\Re_{Q_\ell}$, $\Re^c_{Q_\ell}$,  $1\leq \hbar\leq \ell$, are stored by using $\cal O$($\log n$) 
space, on the second track of the working tape of $\cal A$. It is easy to observe that $\cal A$ has $\cal O$($\log n$) levels. Since at each level 
$\cal A$ spawns $\cal O$($c^{\log n}$) existential branches, where $c$ is a constant (each level being thus convertible, by a divide and conquer 
procedure, into a binary tree with $\cal O$($\log n$) levels), and at each Level 4$\hbar$, $1\leq \hbar\leq \ell$,  $\cal A$ performs a division 
(of small numbers) operation, which requires at most $\cal O$($\log n$) time and space \cite{DHK}, $\cal A$ will perform the whole computation in 
$\cal O$($\log^2 n$) time and $\cal O$($\log n$) space. 
\end{proof} 

\begin{corollary}
$ELIN \subseteq$ $\cal A$$\cal C$$^1$. 
\end{corollary}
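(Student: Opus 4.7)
The plan is to invoke the classical characterization of (log-space uniform) $\cal A$$\cal C$$^1$ in terms of alternating Turing machines, due to Ruzzo: a language $L$ belongs to $U_L$-uniform $\cal A$$\cal C$$^1$ if and only if $L$ is recognized by an indexing ATM running simultaneously in $\cal O$$(\log n)$ space and $\cal O$$(\log n)$ alternations (equivalently $\cal O$$(\log^2 n)$ alternating time). Theorem 4.3 provides exactly such a machine $\cal A$ for every even linear language, so the inclusion $ELIN \subseteq$ $\cal A$$\cal C$$^1$ will follow immediately.

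More concretely, first I would recall the above equivalence and verify that the resource bounds in the statement of Theorem 4.3---$\cal O$$(\log n)$ space, $\cal O$$(\log^2 n)$ time, and a logarithmic number of alternations---match the ATM side of the equivalence. Second, I would check that the machine $\cal A$ constructed in the proof of Theorem 4.3 is log-space uniform. The finite control of $\cal A$ depends only on the fixed grammar $G_k$, while all data manipulated along a branch---indices into the input, bracket tuples $\Re_{R_\hbar}$, $\Re^c_{R_\hbar}$, $\Re_{Q_\ell}$, $\Re^c_{Q_\ell}$, cutting points $P_\ell^u$, and the iterated quotients and remainders $Q_\ell, R_\ell$---are bit-strings of length $\cal O$$(\log n)$ produced and updated by a deterministic log-space transducer, using only the fact that division of $\cal O$$(\log n)$-bit integers is in $\cal L$.

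The main obstacle is bookkeeping: one must verify that the computation tree of $\cal A$, which at each of its $\cal O$$(\log n)$ levels spawns $\cal O$$(c^{\log n})$ (hence polynomially many) existential or universal branches, can be unfolded into a $U_L$-uniform alternating circuit family of depth $\cal O$$(\log n)$ and polynomial size with unbounded fan-in AND/OR gates. This, however, is precisely what Ruzzo's simulation yields when applied to $\cal A$, so no new ideas beyond those already embedded in Theorem 4.3 are required, and the corollary drops out.
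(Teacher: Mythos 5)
Your proposal is correct and follows essentially the same route as the paper: both reduce the corollary to the characterization of $U_L$-uniform $\cal A$$\cal C$$^1$ as languages accepted by ATMs with $\cal O$$(\log n)$ space and $\cal O$$(\log n)$ alternations (the paper cites Cook and Vollmer for this, you cite Ruzzo), and then observe that the machine of Theorem 4.3 meets these bounds. Your added remarks on uniformity and on the auxiliary arithmetic play the same role as the paper's observation that division and binary conversion are in $\cal A$$\cal C$$^0$ and hence do not affect the alternation count.
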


\begin{proof}
The computation tree of the ATM described in the proof of Theorem 4.3 has a logarithmic number of alternations. This number is not affected by 
the complexity of the division operation, or of the conversion procedure of a positive integer into its binary representation, since all these 
problems can be computable in $\cal A$$\cal C$$^0$ \cite{DHK}. According to the characterization of the $\cal A$$\cal C$ classes in 
terms of ATM resources \cite{C}, \cite{V}, we have $U_L$-uniform $\cal A$$\cal C$$^i = AALT - SPACE(\log^in,\log n)$, for all $i\geq 1$, and 
$\cal A$$\cal C$$^0 = AALT - TIME(CON,\log n)$, where  $AALT - SPACE(\log^in,\log n)$ is the class of problems computable by an ATM by using 
$\cal O$($\log n$) space and $\cal O$($\log^i n$) number of alternations, while  $AALT - TIME(CON,\log n)$ is the class of problems computable 
by an ATM by using $\cal O$($\log n$) time  and a constant number of alternations.             
\end{proof}

\end{document}